\newtheorem{theorem}{Theorem}
\theoremstyle{definition}
\newtheorem{corollary}{Corollary}
\newtheorem{lemma}{Lemma}
\theoremstyle{remark}
\newtheorem{remark}{Remark}
\theoremstyle{definition}
\theoremstyle{definition}
\newtheorem*{problem}{Problem}
\journal{}
\begin{document}

\begin{frontmatter}



\title{Model Reference Adaptive Control of Piecewise Affine Systems with State Tracking Performance Guarantees}


\author{Tong Liu, Martin Buss}

\address{Technische Universit{\"a}t M{\"u}nchen, Munich, 80333, Germany}

\begin{abstract}
In this paper, we investigate the model reference adaptive control approach for uncertain piecewise affine systems with performance guarantees. The proposed approach ensures the error metric, defined as the weighted Euclidean norm of the state tracking error, to be confined within a user-defined time-varying performance bound. We introduce an auxiliary performance function to construct a barrier Lyapunov function. This auxiliary performance signal is reset at each switching instant, which prevents the transgression of the barriers caused by the jumps of the error metric at switching instants. The dwell time constraints are derived based on the parameters of the user-defined performance bound and the auxiliary performance function. We also prove that the Lyapunov function is non-increasing even at the switching instants and thus does not impose extra dwell time constraints. Furthermore, we propose the robust modification of the adaptive controller for the uncertain piecewise affine systems subject to unmatched disturbances. A Numerical example validates the correctness of the proposed approach.
\end{abstract}



\begin{keyword}


piecewise affine systems \sep adaptive control 
\sep time-varying performance guarantees \sep barrier Lyapunov function
\end{keyword}

\end{frontmatter}


\section{Introduction}
\label{sec: intro}
The study of piecewise affine systems (PWA) systems has been of significant interest due to their capability to approximate nonlinear systems and model hybrid systems. A PWA system consists of several linear subsystems. Each subsystem is associated with a certain region in the state-input space. Depending on in which region the state-input vector lies, the PWA system is governed by the associated subsystem dynamics. When the state-input trajectory goes through the boundary of two neighbouring regions (described mathematically by hyperplanes), the switching from one subsystem to another subsystem is triggered. Early studies focus on the controllability and observability \cite{bemporad2000observability,collins2004observability}, convergence analysis \cite{pavlov2007convergence}, and control synthesis \cite{rodrigues2003observer,habets2006reachability}, where the system parameters and region partitions are exactly known.

In the physical world, an exact system model is mostly not accessible due to uncertainties and disturbances. Therefore, introducing the adaptive mechanism into the uncertain PWA systems has significant meaning, especially when the uncertainties and disturbances are so large that a single robust controller cannot stabilize the closed-loop system. 
Due to the hybrid nature of the PWA systems, not only the uncertain parameters need to be estimated by designing adaptation laws, but also the switching behavior of the closed-loop system needs to be carefully considered.
In the last decade, model reference adaptive control (MRAC) approaches have been investigated for uncertain PWA systems. 
The methods proposed in work of di Bernardo \textit{et al.} \cite{di2013hybrid, bernardo2013model, di2016extended} rely on common Lyapunov functions, where the closed-loop systems are allowed to switch arbitrarily fast. MRAC for piecewise linear (PWL) systems, a special version of the PWA systems, are investigated in work of Sang and Tao \cite{sang2011adaptive2, sang2012adaptive2}, where the dwell-time constraints for switches are given to ensure the closed-loop stability. Its extension to PWA systems is reported recently \cite{kersting2017direct}, where the exponential decaying of the state tracking error is proved given that a persistently exciting (PE) condition and some dwell time constraints are fulfilled. To enhance the robustness of the adaptive switched systems against disturbances and time-delay, some robust MRAC approaches have been proposed for switched linear systems, whose formulation is similar to PWA systems but with switching signals given externally. These include robust MRAC with dead zone \cite{wang2012model} and leakage \cite{yuan2018robust}, robust $H_{\infty}$ MRAC \cite{wu2015h,xie2018h} as well as MRAC with asynchronous switching between subsystems and controllers \cite{wu2015adaptive,yuan2018novel}. 

Despite the aforementioned advances, the adaptive control for PWA systems fulfilling a user-defined performance guarantee (such as state constraints) is rarely studied. In light of the fact that a lot of systems in practice have constraints like physical or operational boundaries, saturation, performance and safety specifications, we would like to explore the MRAC of PWA systems with state tracking performance guarantees.

Notable progress has been made in the field of performance guarantees with adaptive control methods. These include funnel control \cite{ilchmann2009pi, hackl2013funnel}, barrier Lyapunov function-based approach \cite{tee2009barrier} and prescribed performance control \cite{bechlioulis2008robust,bechlioulis2010prescribed}. All of these methods are proposed to confine the output tracking error within the predefined constraints. Although some recent barrier Lyapunov function-based controllers achieve the full state constraints \cite{liu2014adaptive, liu2016barrier, zhao2018removing, niu2015new}, they are built upon the backstepping structure, which requires the controlled system to be in strict feedback form or pure feedback form. This prevents their application to the generalized PWA systems.

Recently, a set-theoretic based MRAC for linear systems is developed\cite{arabi2018set}. It uses the barrier Lyapunov function concept to confine the weighted Euclidean norm of the state tracking error within a predefined bound. The controller does not rely on the backstepping-type analysis and therefore does not imposes restrictions on the system structure. This method is extended to the cases with time-varying performance bound \cite{arabi2019set}, system with actuator faults \cite{xiao2019robust} and systems with unstructured uncertainties \cite{arabi2019neuroadaptive}. However, applying this method to switched systems is nontrivial and challenging. If the barrier function is constructed with the user-defined performance bound being the barrier, as it is done in the linear system case, then the discontinuity of the weighted Euclidean norm of the tracking error at switching instants may cause transgression of the barrier, which makes the barrier function invalid. Besides, only matched uncertainties (uncertainties, which can be compensated with an additional input term) are addressed in the work of set-theoretic MRAC approaches. Since the PWA systems are mostly approximation of nonlinear systems, their approximation errors are not necessarily matched, let alone other kinds of external disturbances.

The main contribution of this paper is twofold. First, a set-theoretic MRAC approach for uncertain PWA systems with state tracking performance guarantees is developed. Second, a robust modification of this method is proposed for PWA systems subject to unmatched disturbances. Specifically, we impose an auxiliary performance signal with a state reset map to construct the barrier function, which bypasses the barrier transgression problem.
The dwell time constraints are derived based on the auxiliary performance signal and the user-defined performance bound. The Lyapunov function is non-increasing, even at switching instants and therefore, does not impose extra dwell time constraints. Furthermore, a projection-based robust modification of the proposed approach is developed to enhance the robustness against disturbances. Compared with the state-of-the-art set-theoretic MRAC approaches, the disturbances are not required to be matched and boarder application is achieved.

The paper is structured as follows. The definition of PWA systems, MRAC and the performance function are revisited in Section \ref{sec: pre}. The proposed method is explained in Section \ref{sec: method}, in which the stability analysis is also provided. A numerical example is illustrated in Section \ref{sec: validation}. 

\textit{Notations:} In this paper, $\mathbb{R}, \mathbb{R^+}$ and $\mathbb{N}^+$ denote the set of real numbers, positive real numbers and positive natural numbers, respectively. $\mathrm{tr}(\cdot)$ represents the trace of a matrix. The Euclidean norm is denoted by $\|\cdot\|_2$. $\lambda_{\mathrm{max}}(P)$ and $\lambda_{\mathrm{min}}(P)$ represent the maximal and minimal eigenvalues of matrix $P$, respectively.

\section{Preliminaries and Problem Statement}
\label{sec: pre}
Consider the nonlinear system
\begin{equation}
    \dot{x}(t) = g(x(t), u(t)),
\end{equation}
where $x \in \mathbb{R}^n$ and $u \in \mathbb{R}^p$ denote its state and control input signal. $g: \mathbb{R}^{n+p} \to \mathbb{R}^n$ represents a smooth nonlinear function. 
Given a set of operating points $(x_i^*, u_i^*), i \in \mathcal{I} \triangleq\{1,2,\cdots,s\}$, the state-input space $[x^T, u^T]^T \in \mathbb{R}^{n+p}$ can be divided into $s$ convex regions $\{\Omega_i\}_{i=1}^s$. Each operating point locates at the center of each region. For every time instant $t$, the state-input vector $[x^T(t),u^T(t)]^T$ can only belong to one region. The regions have no overlaps, i.e., $\Omega_i \cap \Omega_j = \emptyset$ for $i \neq j$ and $i,j \in \mathcal{I}$. The linearization of the nonlinear system around the $i$-th operating point is given by
\begin{equation}
    \dot{x} \approx g(x_i^*, u_i^*) + A_i(x-x_i^*) + B_i(u-u_i^*),
\end{equation}
where $A_i =\frac{\partial g}{\partial x} |_{(x_i^*, u_i^*)} \in \mathbb{R}^{n \times n}$ and $B_i =\frac{\partial g}{\partial u} |_{(x_i^*, u_i^*)} \in \mathbb{R}^{n \times p}$.
Neglecting the high order terms gives the linearized subsystem associated with region $\Omega_i$
\begin{align}
\begin{split}
    \dot{x} \approx A_i x + B_i u + f_i, \quad x \in \Omega_i
\end{split}
\end{align}
with $f_i =g(x_i^*, u_i^*) - A_i x_i^* - B_i u_i^* \in \mathbb{R}^{n}$. To characterize in which region the state-input vector locates, we define the following indicator function
\begin{equation}
    \chi_i(t)=
    \begin{cases}
        1, & \quad \text{if}(x(t), u(t)) \in \Omega_i\\
        0, & \quad \text{otherwise}
    \end{cases}
\end{equation}
Since the regions $\{\Omega_i\}_{i=1}^s$ have no overlaps, we have $\sum_{i=1}^s \chi_i=1$ and $\prod_{i=1}^s \chi_i =0$. Thus, the PWA system can be written as
\begin{align}
\begin{split}
\label{eqn: plant_ss}
    \dot{x}(t)=A(t) x(t) + B(t) u(t) + f(t)
\end{split}
\end{align}
with $A(t)=\sum_{i=1}^s \chi_i(t) A_i$, $B(t)=\sum_{i=1}^s \chi_i(t) B_i$ and $f(t)=\sum_{i=1}^s \chi_i(t) f_i$. 

In this paper, the reference system is also chosen to be a PWA model, which provides more design flexibility for the user. Without loss of generality, we let the reference PWA system (\ref{eqn: ref sys}) and the controlled PWA system (\ref{eqn: plant_ss}) have the same region partitions and therefore, the same indicator functions. The PWA reference system is given by
\begin{equation}
    \dot{x}_m(t)=A_m(t) x_m(t) + B_m(t) r(t) + f_m(t),
    \label{eqn: ref sys}
\end{equation}
where $x_m \in \mathbb{R}^n$ and $r \in \mathbb{R}^p$ denote the state and input of the reference system, $A_m(t) = \sum_{i=1}^s \chi_i(t) A_{mi}$, $B_m(t) = \sum_{i=1}^s \chi_i(t) B_{mi}$, $f_m(t) = \sum_{i=1}^s \chi_i(t) f_{mi}$ with $A_{mi} \in \mathbb{R}^{n \times n}, B_{mi} \in \mathbb{R}^{n \times p}, f_{mi} \in \mathbb{R}^{n}, i \in \mathcal{I}$ being the parameters of the reference system.  $A_{mi}$ are Hurwitz matrices and there exists a set of positive definite matrices $P_i$ and $Q_i \in \mathbb{R}^{n \times n}, i \in \mathcal{I}$ such that
\begin{equation}
\label{eqn: lyap_eq}
    A_{mi}^T P_i + P_i A_{mi} = - Q_i, \quad \forall i\in \mathcal{I}
\end{equation}

For each subsystem, a set of controller gains is utilized. Let $K_{xi}^* \in \mathbb{R}^{p \times n}, K_{ri}^* \in \mathbb{R}^{p \times p}, K_{fi}^* \in \mathbb{R}^{p}, i\in \mathcal{I}$ denote the nominal controller gains for the $i$-th subsystem of (\ref{eqn: plant_ss}). The controller gains and the system parameters switch synchronously and therefore, the controller takes the form
\begin{equation}
\label{eqn: controller_nominal}
u(t)=K_{x}^* x(t)+ K_r^* r(t) + K_f^*,
\end{equation}
where $K_{x}^*(t)=\sum_{i=1}^s \chi_i(t) K_{xi}^*$, $K_{r}^*(t)=\sum_{i=1}^s \chi_i(t) K_{ri}^*$, $K_{f}^*(t)=\sum_{i=1}^s \chi_i(t) K_{fi}^*$. Taking (\ref{eqn: controller_nominal}) into (\ref{eqn: plant_ss}) yields the closed-loop system, which should have the same behavior as the reference system. That gives the matching equations
\begin{align}
    \begin{split}
        A_{mi}=A_i+B_i K_{xi}^*,\quad
        B_{mi}=B_i K_{ri}^*,\quad
        f_{mi}=f_i+B_i K_{fi}^*,\quad \forall i\in\mathcal{I}
    \end{split}  
\end{align}
Since $A_i, B_i, f_i$ are unknown, the nominal controller gains $K_{xi}^*, K_{ri}^*, K_{fi}^*$ are not available. Let $K_{xi}(t)\in \mathbb{R}^{p \times n}, K_{ri}(t) \in \mathbb{R}^{p \times p}, K_{fi}(t) \in \mathbb{R}^p$ be the estimates of $K_{xi}^*, K_{ri}^*, K_{fi}^*$ and we introduce the following adaptive controller
\begin{equation}
\label{eqn: controller_adaptive}
u(t)=K_{x}(t) x(t)+ K_r(t) r(t) + K_f(t)
\end{equation}
with  $K_{x}(t)=\sum_{i=1}^s \chi_i(t) K_{xi}(t)$, $K_{r}(t)=\sum_{i=1}^s \chi_i(t) K_{ri}(t)$ and $K_{f}(t)=\sum_{i=1}^s \chi_i(t) K_{fi}(t)$. Inserting (\ref{eqn: controller_adaptive}) into the controlled PWA system (\ref{eqn: plant_ss}) and defining the state tracking error $e(t)=x(t)-x_m(t)$, we have
\begin{equation}
\label{eqn: error_dyn}
    \dot{e}=A_m e + \sum_{i=1}^s \chi_i B_i (\Tilde{K}_{xi} x + \Tilde{K}_{ri} r + \Tilde{K}_{fi}),
\end{equation}
where $\Tilde{K}_{xi}=K_{xi}-K_{xi}^*, \Tilde{K}_{ri}=K_{ri}-K_{ri}^*, \Tilde{K}_{fi}=K_{fi}-K_{fi}^*$.

We define $t_0$ to be the initial time instant and the set $\{t_1, t_2, \cdots, t_k, \cdots | k \in \mathbb{N}^+\}$ to be the switching time instants.

In this paper, we would like to design an adaptive controller for PWA systems such that the norm of the state tracking error $e$ is enforced within a predefined performance bound such that the closed-loop system has performance guarantees. The performance bound can be formulated by a performance function $\rho: \mathbb{R}^+ \to \mathbb{R}^+$, a smooth and decreasing function satisfying $\lim_{t\to \infty} \rho(t)=\rho_\infty >0$. We adopt the following commonly used performance function \cite{bechlioulis2008robust}
\begin{equation}
    \label{eqn: performance_func}
    \rho(t)=(\rho_0-\rho_\infty)\mathrm{e}^{-l(t-t_0)}+\rho_\infty,
\end{equation}
where $\rho_0, \rho_\infty, l \in \mathbb{R}^+$ and $\rho_0 > \rho_\infty$. We can see that $\rho(t)$ is smooth and decreasing with $\rho(t=t_0)=\rho_0$ and $\rho(t\to\infty)=\rho_\infty$. The performance guarantee to be satisfied can be formulated as
\begin{equation}
    \label{eqn: error_constraint}
    \|e(t)\|_{P} < \rho(t),
\end{equation}
where $\|e(t)\|_{P}$ is defined to be the weighted Euclidean norm of $e(t)$ with the weighting matrix $P$, i.e., $\|e(t)\|_{P}=(e^T(t) P e(t))^{\frac{1}{2}}$. $\|e(t)\|_{P}$ serves as a performance measure reflecting the difference between the state of the controlled system and the reference system. $P$ is equal to $P_i$ if subsystem $i$ is activated, i.e., $P=\sum_{i=1}^s \chi_i(t) P_i$. So $\|e(t)\|_{P}$ and the system parameters switch synchronously. 
\begin{remark}
Note that defining a switching performance measure $\|e(t)\|_P$ will not make our approach restrictive. If a global performance measure is desired, i.e., $\|e(t)\|_S < \rho^{*}(t)$ ($S \in \mathbb{R}^{n \times n}$ being constant and positve definite) must hold for every subsystem, then we could choose $P_i, i\in \mathcal{I}$ matrices such that
\begin{equation}
    \|e\|_S \leq \min_{i\in\mathcal{I}}\sqrt{\frac{\lambda_{\mathrm{min}}(P_i)}{\lambda_\mathrm{max}(S)}}\|e\|_P.
\end{equation}
We obtain $\|e(t)\|_S < \rho^*(t)$ if we can make $\|e\|_P \leq  \sqrt{\frac{\lambda_\mathrm{max}(S)}{\min_{i\in\mathcal{I}}\lambda_{\mathrm{min}}(P_i)}} \rho^{*}(t)\triangleq \rho(t)$. This bring us back to the form (\ref{eqn: error_constraint})
\end{remark}
The problem to be studied in this paper is formulated as follows: 
\begin{problem}
\label{prl: controller}
Given a performance function (\ref{eqn: performance_func}), a reference model (\ref{eqn: ref sys}) and a PWA system (\ref{eqn: plant_ss}) with unknown subsystem parameters $A_i, B_i, f_i$ and known regions $\Omega_i$, design an adaptive control law $u(t)$ such that the state $x(t)$ of (\ref{eqn: plant_ss}) tracks the state $x_m(t)$ of (\ref{eqn: ref sys}) with the tracking error $e(t)$ satisfying the performance guarantee (\ref{eqn: error_constraint}).
\end{problem}

\section{Adaptive Control Design}
\label{sec: method}
In this section, we propose the adaptive controller and adaptation laws of the controller gains to solve the given problem in the disturbance-free case. First, we introduce the auxiliary performance bound and explain the solution concept. Then the proposed adaptation laws are presented, which is followed by the stability analysis of the closed-loop system.

\subsection{Auxiliary Performance Bound}
\label{sec: method_1}
We define a generalized restricted potential function (barrier function) $\phi: \mathbb{R}^+ \to \mathbb{R}^+$ on the set $\mathcal{D}_{\theta}\triangleq\{e\: |\: \|e\|_{P}\in [0,\theta)\}$
\begin{equation}
\label{eqn: phi_def}
    \phi(\|e\|_{P})=\frac{\|e\|_{P}^2}{\theta^2(t)-\|e\|_{P}^2}, \quad \|e\|_{P}<\theta(t).
\end{equation}
Suppose that $\|e(t_0)\|_P <\rho(t_0)$, the set-theoretic MRAC approach for linear systems \cite{arabi2019set} suggests specifying the barrier $\theta$ to be $\rho(t)$ and designing the adaptation laws such that $\phi(\|e\|_{P})$ is bounded $\forall t\in [t_0,\infty)$, then it can be obtained that $\|e(t)\|_P <\rho(t), \forall t\in [t_0,\infty)$.

The difficulty in switched systems is that $P=\sum_{i=1}^s\chi_i(t)P_i$ leads to the jumps of $\|e(t)\|_P$ at switching instants. Suppose $\chi_i(t)=1$ for $t \in [t_{k-1},t_k)$ and $\chi_j(t)=1$ for $t \in [t_{k},t_{k+1})$ for $i\neq j, i,j \in \mathcal{I}$, we have
\begin{align}
    \begin{split}
        \|e(t_k)\|_P^2=e^T(t_k)P_j e(t_k)
        \leq \lambda_{\text{max}}(P_j)\|e(t_k)\|^2
        \leq \frac{\lambda_{\text{max}}(P_j)}{\lambda_{\text{min}}(P_i)}\|e(t_k^-)\|_{P}^2,
    \end{split}
\end{align}
which may result in $\|e(t_k)\|_P > \rho(t_k)$ for $\frac{\lambda_{\text{max}}(P_j)}{\lambda_{\text{min}}(P_i)}>1$ and $\|e(t_k^-)\|_P < \rho(t_k^-)$. This further makes the barrier function $\phi(\|e\|_P)$ invalid. We call this \textit{barrier transgression} problem.

To overcome this problem, our idea is to introduce an auxiliary performance bound, denoted by $\epsilon(t)$, which decays faster than the user-defined performance bound $\rho(t)$. $\epsilon(t)$ is reset at each switching instant such that $\|e(t_k)\|_P < \epsilon(t_k)$ for $k\in\mathbb{N}^+$. If the adaptive controller ensures $\|e\|_P < \epsilon(t)$ and if $\epsilon(t)$ is designed such that $\epsilon(t) < \rho(t)$ for $t \in [t_0,\infty)$, then the control objective (\ref{eqn: error_constraint}) is achieved.

We propose the auxiliary performance bound $\epsilon(t)$ generated by the following dynamics
\begin{equation}
\label{eqn: eps_dyn}
    \dot{\epsilon}(t)=-h \epsilon(t) + g, \quad \epsilon(t_0) \in (\frac{g}{h},\rho_0), \quad  \epsilon(t_k)=G(\epsilon(t_k^-)),
\end{equation}
with $h, g \in \mathbb{R}^+$. $G: \mathbb{R}^+ \to \mathbb{R}^+$ is a state reset map. It resets the value of $\epsilon$ at each switching instant. Note that $\epsilon$ shares the same switching instants with the controlled PWA system $t_k, k\in \mathbb{N}^+$, i.e., every time when the switch of the controlled PWA system occurs, $\epsilon$ is reset by the state reset map simultaneously. We specify the state reset map $G$ to be 
\begin{equation}
    G(\epsilon(t_k^-))=\sqrt{\mu} \epsilon(t_k^-), 
\end{equation}
for some $\mu \in \mathbb{R}^+$ and $\mu > 1$.
As stated before, $\epsilon(t)$ should be smaller than $\rho(t), \forall t \in [t_0, \infty)$. To achieve this, the state reset of $\epsilon(t)$ needs to satisfy some dwell time constraints, i.e., $\min\{{t_{k}-t_{k-1}}\} \geq \tau_D, k\in \mathbb{N}^+$ for some $\tau_D \in \mathbb{R}^+$. We have the following lemma:

\begin{lemma}
\label{thm: eps_rho}
Given the performance function (\ref{eqn: performance_func}) and the auxiliary performance bound (\ref{eqn: eps_dyn}), if $h>l$, $\rho_\infty>\sqrt{\mu}\frac{g}{h}$ and if the dwell time of $\epsilon(t)$ satisfies
\begin{equation}
\label{eqn: condition_tau}
    \tau_D \geq \frac{1}{h-l}\ln{\frac{\sqrt{\mu} \rho_\infty-\frac{g}{h}\sqrt{\mu}}{\rho_\infty-\frac{g}{h}\sqrt{\mu}}}
\end{equation}
for some $\mu > 1$, then the following inequality holds
\begin{equation}
    \frac{g}{h}\leq \epsilon(t) < \rho(t),\quad \forall t\in[t_0,\infty)
\end{equation}
\end{lemma}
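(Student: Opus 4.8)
The plan is to analyze the trajectory of $\epsilon(t)$ piece by piece between switching instants and to track how the worst-case value of $\epsilon$ evolves under the combined effect of continuous decay and the multiplicative jumps $\epsilon(t_k) = \sqrt{\mu}\,\epsilon(t_k^-)$. First I would solve the linear ODE $\dot\epsilon = -h\epsilon + g$ explicitly on each interval $[t_{k-1}, t_k)$, obtaining $\epsilon(t) = \tfrac{g}{h} + \big(\epsilon(t_{k-1}) - \tfrac{g}{h}\big)\mathrm{e}^{-h(t-t_{k-1})}$. Two immediate consequences: since $\epsilon(t_0) > g/h$ and each reset multiplies by $\sqrt\mu > 1$, one shows by induction that $\epsilon(t_{k-1}) > g/h$ for all $k$, hence $\epsilon(t) > g/h$ on every interval, giving the lower bound $g/h \le \epsilon(t)$ (in fact strict) for all $t \ge t_0$. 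This also shows $\epsilon$ is strictly decreasing on each open interval, so within an interval the supremum of $\epsilon$ is attained at the left endpoint $t_{k-1}$.

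Next I would establish the upper bound $\epsilon(t) < \rho(t)$ by induction on the switching index, using the dwell-time condition. The key quantity is the value $\epsilon(t_k^-)$ just before a switch: from the explicit solution and the dwell-time bound $t_k - t_{k-1} \ge \tau_D$, together with $\epsilon(t_{k-1}) < \rho_\infty$ wait—more carefully, the induction hypothesis should be that $\epsilon(t_{k-1}) \le \rho_\infty$ (or perhaps $\epsilon(t_k) \le \rho_\infty$ after reset); I would check that the right invariant is $\epsilon(t_k) \le \rho_\infty$ for all $k \ge 1$. Given $\epsilon(t_{k-1}) \le \rho_\infty$, the decay over at least $\tau_D$ gives $\epsilon(t_k^-) \le \tfrac{g}{h} + (\rho_\infty - \tfrac{g}{h})\mathrm{e}^{-h\tau_D}$, and after the reset $\epsilon(t_k) \le \sqrt\mu\big(\tfrac{g}{h} + (\rho_\infty - \tfrac{g}{h})\mathrm{e}^{-h\tau_D}\big)$. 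Requiring this to be $\le \rho_\infty$ is exactly where condition~(\ref{eqn: condition_tau}) comes from: solving $\sqrt\mu\big(\tfrac{g}{h} + (\rho_\infty - \tfrac{g}{h})\mathrm{e}^{-h\tau_D}\big) \le \rho_\infty$ for $\tau_D$ yields $\tau_D \ge \tfrac{1}{h}\ln\tfrac{\sqrt\mu(\rho_\infty - g/h)}{\rho_\infty - \sqrt\mu\, g/h}$, which the stated bound with the extra factor $\tfrac{1}{h-l}$ and the $\sqrt\mu$-weighted $g/h$ terms dominates (the condition $\rho_\infty > \sqrt\mu\, g/h$ guarantees the logarithm's argument is positive and the invariant is self-consistent). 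So the invariant propagates.

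Having $\epsilon(t_k) \le \rho_\infty$ at reset instants and monotone decay in between gives $\epsilon(t) \le \rho_\infty$ for all $t \ge t_1$; since $\rho(t) > \rho_\infty$ always, this settles $\epsilon(t) < \rho(t)$ on $[t_1,\infty)$. The remaining piece is the first interval $[t_0, t_1)$: here $\epsilon$ starts at $\epsilon(t_0) < \rho_0 = \rho(t_0)$ and decays, while $\rho(t)$ also decays but only to $\rho_\infty > \epsilon(t_0)$... actually one must compare the decay rates, and this is why $h > l$ is assumed — $\epsilon$ must not fall below $\rho$ is the wrong direction; rather, I need $\epsilon$ to stay below $\rho$, and since $\epsilon$ decays faster ($h>l$) from a smaller initial value toward a limit $g/h$ that is below $\rho_\infty$, a direct comparison $\epsilon(t) - \rho(t) < 0$ holds throughout. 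I expect the main obstacle to be the bookkeeping at the transition between "first interval" behavior and the "steady-state invariant" $\epsilon(t_k)\le\rho_\infty$: one must verify the invariant is actually established at $k=1$, i.e. that $\epsilon(t_1^-) < \rho(t_1^-)$ together with the dwell-time bound forces $\epsilon(t_1) = \sqrt\mu\,\epsilon(t_1^-) \le \rho_\infty$, which requires $\epsilon(t_1^-)$ to already be small enough — this is where the role of $\tau_D$ and the faster decay $h>l$ interact most delicately, and I would handle it by bounding $\epsilon(t_1^-) \le \tfrac{g}{h} + (\rho_0 - \tfrac{g}{h})\mathrm{e}^{-h\tau_D}$ and checking the resulting inequality against~(\ref{eqn: condition_tau}) using $\rho_0 > \rho_\infty$ carefully, or alternatively by choosing the induction hypothesis to start from $\epsilon(t_0) < \rho_0$ directly and absorbing the first step into the general argument.
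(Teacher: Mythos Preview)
Your lower-bound argument $\epsilon(t)\ge g/h$ is fine and matches the paper. The gap is in your upper-bound induction: the invariant you try to propagate, $\epsilon(t_k)\le\rho_\infty$, is \emph{strictly stronger} than what the dwell-time condition~(\ref{eqn: condition_tau}) actually guarantees, and it fails at the base step $k=1$. Concretely, take $g/h\approx 0$, $\mu=4$, $h=1$, $l=0.01$, $\rho_\infty=1$, $\rho_0=100$. Condition~(\ref{eqn: condition_tau}) gives $\tau_D\ge\tfrac{1}{0.99}\ln 2\approx 0.7$. With $\epsilon(t_0)=99<\rho_0$ and $\Delta t_1=0.7$ you get $\epsilon(t_1^-)\approx 99\,\mathrm{e}^{-0.7}\approx 49$ and $\epsilon(t_1)=\sqrt{\mu}\,\epsilon(t_1^-)\approx 98$, which is nowhere near $\rho_\infty=1$. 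Yet $\rho(t_1)\approx 99.3$, so $\epsilon(t_1)<\rho(t_1)$ still holds. Your proposed fix of ``checking against~(\ref{eqn: condition_tau}) using $\rho_0>\rho_\infty$'' cannot work here: the inequality you would need is $\tau_D\ge\tfrac{1}{h}\ln\tfrac{\sqrt{\mu}(\rho_0-g/h)}{\rho_\infty-\sqrt{\mu}\,g/h}$, and since $\rho_0$ can be arbitrarily large relative to $\rho_\infty$ this is not dominated by the stated bound.

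The paper avoids this by using the \emph{weaker} inductive hypothesis $\epsilon(t_{k-1})<\rho(t_{k-1})$ and comparing $\epsilon$ directly with the moving target $\rho$ rather than with its limit $\rho_\infty$. On $[t_{k-1},t_k)$ one writes $\rho(t)=(\rho(t_{k-1})-\rho_\infty)\mathrm{e}^{-l(t-t_{k-1})}+\rho_\infty$ and $\epsilon(t)=(\epsilon(t_{k-1})-g/h)\mathrm{e}^{-h(t-t_{k-1})}+g/h$, then estimates
\[
\rho(t_k)-\epsilon(t_k)\;\ge\;(\rho(t_{k-1})-\sqrt{\mu}\tfrac{g}{h})\mathrm{e}^{-l\Delta t_k}-\sqrt{\mu}(\rho(t_{k-1})-\tfrac{g}{h})\mathrm{e}^{-h\Delta t_k},
\]
which is positive provided $\Delta t_k\ge\tfrac{1}{h-l}\ln\tfrac{\sqrt{\mu}(\rho(t_{k-1})-g/h)}{\rho(t_{k-1})-\sqrt{\mu}\,g/h}$. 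The right-hand side is a \emph{decreasing} function of $\rho(t_{k-1})$, so its supremum over all $k$ is attained as $\rho(t_{k-1})\downarrow\rho_\infty$, yielding exactly~(\ref{eqn: condition_tau}). This is why the factor $\tfrac{1}{h-l}$, not $\tfrac{1}{h}$, appears: it comes from comparing two exponentials with rates $h$ and $l$, not from bounding $\epsilon$ alone. Your ``alternative'' of absorbing the first step into the general argument is the right instinct, but the general argument itself must carry the time-varying comparison $\epsilon(t_k)<\rho(t_k)$, not the static one $\epsilon(t_k)\le\rho_\infty$.
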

The proof of Lemma \ref{thm: eps_rho} can be seen in \ref{apd: lemma_1}.

Since $\epsilon$, the reference system (\ref{eqn: ref sys}) and the closed-loop system share the same switching signal, the first question to ask is, if the reference system is stable with the dwell time constraint (\ref{eqn: condition_tau})? This is answered by the following lemma.

\begin{lemma}
\label{thm: ref_stability}
The reference system (\ref{eqn: ref sys}) satisfying (\ref{eqn: lyap_eq}) is stable with the dwell time constraint (\ref{eqn: condition_tau}).
\end{lemma}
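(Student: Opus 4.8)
The plan is to establish boundedness of $x_m(t)$ by a multiple–Lyapunov–function argument with dwell time, using the mode-dependent functions $V_i(x_m)=x_m^{T}P_i x_m$ provided by (\ref{eqn: lyap_eq}). On a dwell interval $[t_{k-1},t_k)$ on which subsystem $i$ is active, I would differentiate $V(t)=x_m^{T}(t)P(t)x_m(t)$ along (\ref{eqn: ref sys}) and substitute (\ref{eqn: lyap_eq}) to obtain $\dot V=-x_m^{T}Q_i x_m+2x_m^{T}P_i(B_{mi}r+f_{mi})$. Since $r$ is bounded (as is standard for the reference input) and there are only finitely many modes, the last term is dominated by $2\bar c\,\|x_m\|$ for a mode-independent constant $\bar c$ depending on $\sup_t\|r\|$, $\max_i\|f_{mi}\|$, $\max_i\|B_{mi}\|$ and $\max_i\lambda_{\mathrm{max}}(P_i)$; completing the square against half of $-x_m^{T}Q_i x_m$ then yields an ISS-type bound $\dot V\le-2\alpha V+d$ with uniform $\alpha=\min_i\lambda_{\mathrm{min}}(Q_i)/(4\max_i\lambda_{\mathrm{max}}(P_i))$ and some $d>0$. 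By the comparison lemma, $V(t)\le \mathrm{e}^{-2\alpha(t-t_{k-1})}V(t_{k-1}^{+})+d/(2\alpha)$ on the interval.

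Next I would quantify the jump of $V$ at a switch $t_k$ from subsystem $i$ to subsystem $j$. Exactly as in the computation preceding Lemma~\ref{thm: eps_rho} (with $x_m$ in place of $e$), $V(t_k^{+})=x_m^{T}(t_k)P_j x_m(t_k)\le\frac{\lambda_{\mathrm{max}}(P_j)}{\lambda_{\mathrm{min}}(P_i)}x_m^{T}(t_k)P_i x_m(t_k)\le\mu\,V(t_k^{-})$, where the last inequality uses that $\mu$ must be chosen with $\mu\ge\max_{i,j\in\mathcal I}\lambda_{\mathrm{max}}(P_j)/\lambda_{\mathrm{min}}(P_i)$ — this is precisely the condition under which the reset map $G(\epsilon)=\sqrt{\mu}\,\epsilon$ keeps $\|e(t_k)\|_P<\epsilon(t_k)$, and I would state it explicitly at this point. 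Composing the within-interval decay with the jump over one period of length at least $\tau_D$ gives $V(t_k^{+})\le\mu\,\mathrm{e}^{-2\alpha\tau_D}V(t_{k-1}^{+})+\mu d/(2\alpha)$.

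Finally, I would show the sampled sequence $\{V(t_k^{+})\}$ is bounded. This requires the discrete contraction factor $\mu\,\mathrm{e}^{-2\alpha\tau_D}<1$, i.e. $\tau_D>\ln\mu/(2\alpha)$, and I would verify that the dwell-time bound (\ref{eqn: condition_tau}) implies it: since $\sqrt{\mu}>1$ and $\rho_\infty>\sqrt{\mu}\,g/h$, one has $\frac{\sqrt{\mu}(\rho_\infty-g/h)}{\rho_\infty-\sqrt{\mu}g/h}>\sqrt{\mu}$, so the right-hand side of (\ref{eqn: condition_tau}) exceeds $\ln\mu/(2(h-l))$, whence it suffices that $h\le l+\alpha$ (i.e., that $h$ is not taken larger than $l$ plus the reference model's guaranteed decay rate). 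Under this, $V(t_k^{+})$ is bounded by a convergent geometric series, $V$ is bounded on every interval by the comparison estimate, and from $\min_i\lambda_{\mathrm{min}}(P_i)\,\|x_m\|^2\le V$ with $\min_i\lambda_{\mathrm{min}}(P_i)>0$ we conclude that $x_m(t)$ is bounded for all $t\ge t_0$, which is the asserted stability.

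The step I expect to be the main obstacle is this compatibility check: (\ref{eqn: condition_tau}) was derived in Lemma~\ref{thm: eps_rho} purely from the performance-bound parameters $\mu,\rho_\infty,g,h,l$ and carries no information about $\alpha$, so closing the proof needs either the standing design restriction $h\le l+\alpha$ or an argument that $Q_i,P_i$ (hence $\alpha$) can be selected so that (\ref{eqn: condition_tau}) dominates $\ln\mu/(2\alpha)$. A secondary, more routine point is handling the affine and input terms so that the perturbation constants $\bar c,d$ come out uniform over the finitely many modes before the dwell-time comparison is applied globally.
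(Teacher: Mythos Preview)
Your strategy---multiple Lyapunov functions $V_i=x_m^{T}P_ix_m$, exponential decay on dwell intervals, jump factor $\mu$ at switches, and a discrete contraction $\mu\,\mathrm{e}^{-(\text{rate})\tau_D}<1$---is exactly the route the paper takes. Two points of difference are worth noting.

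First, the paper analyzes the \emph{homogeneous} part $\dot{x}_m=A_m x_m$ and obtains the mode-wise rate $\alpha_m=\min_{i\in\mathcal{I}}\lambda_{\min}(Q_i)/\lambda_{\max}(P_i)$, then invokes the standard fact that exponential stability of the homogeneous switched system implies boundedness of the full system for bounded $r$ and $f_{mi}$. By instead carrying the affine and input terms through a square-completion, you sacrifice half of $-x_m^{T}Q_ix_m$ and end up with $2\alpha=\min_i\lambda_{\min}(Q_i)/(2\max_i\lambda_{\max}(P_i))\le \alpha_m/2$. This loss is not cosmetic: the contraction you need becomes $\tau_D>\ln\mu/(2\alpha)\ge 2\ln\mu/\alpha_m$, twice what the homogeneous analysis requires.

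Second, the ``standing design restriction'' you correctly anticipate is already in the paper: it is condition (\ref{eqn: condition_h}), namely $h<\tfrac{1}{2}\alpha_m$, which appears as a hypothesis of Theorem~\ref{thm: direct_stability} (the only place Lemma~\ref{thm: ref_stability} is invoked). With the sharp rate $\alpha_m$, one has $h-l<h<\alpha_m/2$, so
\[
\tau_D\ge\frac{1}{h-l}\ln\frac{\sqrt{\mu}(\rho_\infty-g/h)}{\rho_\infty-\sqrt{\mu}\,g/h}
>\frac{2}{\alpha_m}\ln\sqrt{\mu}=\frac{\ln\mu}{\alpha_m},
\]
and the discrete contraction closes immediately. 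Your proposed condition $h\le l+\alpha$ with $\alpha\le\alpha_m/4$ is strictly stronger than (\ref{eqn: condition_h}) and is \emph{not} implied by it, so as written your argument does not prove the lemma under the paper's actual assumptions. The fix is simply to run the Lyapunov estimate on the homogeneous system (deferring the affine terms to the ISS conclusion), recover the full rate $\alpha_m$, and cite (\ref{eqn: condition_h}) as the link between $h-l$ and $\alpha_m$.
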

The proof of Lemma \ref{thm: ref_stability} can be seen in \ref{apd: lemma_2}.

\subsection{Adaptation Laws}
Based on the auxiliary performance bound proposed in Section \ref{sec: method_1}, we define the following generalized restricted potential function (barrier function) $\phi: \mathbb{R}^+ \to \mathbb{R}^+$ 
\begin{equation}
\label{eqn: phi_def}
    \phi(\|e\|_{P})=\frac{\|e\|_{P}^2}{\epsilon^2(t)-\|e\|_{P}^2}, \quad \|e\|_{P}<\epsilon(t)
\end{equation}
with $P=\sum_{i=1}^s \chi_i(t) P_i$. Since $\|e\|_{P}^2$ and $\epsilon^2(t)$ are piecewise continuous and piecewise differentiable, the partial derivative of $\phi$ with respect to $\|e\|_P^2$ over the time interval $[t_k, t_{k+1})$ takes the form $\phi_d \triangleq {\partial \phi}/{\partial \|e\|_P^2}=\epsilon^2(t)/(\epsilon^2(t)-\|e\|_{P}^2)^2>0$. $\phi$ and $\phi_d$ have the property that
$2\phi_d(\|e\|_{P})\|e\|_{P}^2-\phi>0$.

The adaptation laws of the estimated controller gains are given as
\begin{align}
\begin{split}
\label{eqn: adaptation_law}
    \dot{K}_{xi}&=-\chi_i \phi_d S^T B_{mi}^T P_i e x^T\\
    \dot{K}_{ri}&=-\chi_i \phi_d S^T B_{mi}^T P_i e r^T\\
    \dot{K}_{fi}&=-\chi_i \phi_d S^T B_{mi}^T P_i e
\end{split}
\end{align}
where $S_i \in \mathbb{R}^{p\times p}$ is a matrix such that there exists a symmetric and positive definite matrix $M_i \in \mathbb{R}^{p\times p}$ with $(K_{ri}^*S_i)^{-1}=M_i$. Here we make the usual assumption in adaptive control \cite{tao2014multivariable} that $S_i$ is known. The use of the indicator functions $\chi_i(t)$ in the adaptation laws (\ref{eqn: adaptation_law}) implies that the controller gains associated with a certain subsystem are updated only when this subsystem is activated. Their adaptation terminates and their values stay unchanged during the inactive phase of the corresponding subsystem.

\subsection{Stability Analysis}
The tracking performance and the stability of the closed-loop system are summarized in the following theorem.
\begin{theorem}
\label{thm: direct_stability}
Given the reference PWA system (\ref{eqn: ref sys}) and the predefined performance function (\ref{eqn: performance_func}), let the PWA system (\ref{eqn: plant_ss}) with known regions $\Omega_i, i\in \mathcal{I}$ and unknown subsystem parameters $A_i, B_i, f_i, i\in \mathcal{I}$ be controlled by the feedback controller (\ref{eqn: controller_adaptive}) with the adaptation laws (\ref{eqn: adaptation_law}). Let the initial state of $\epsilon$ satisfies $\|e(t_0)\|_P<\epsilon(t_0)$. The closed-loop system is stable and the state tracking error $e(t)$ satisfies the prescribed performance guarantees (\ref{eqn: error_constraint}) if the time constant $h$ in (\ref{eqn: eps_dyn}) satisfies
\begin{equation}
\label{eqn: condition_h}
    h< \frac{1}{2}\min_{i \in \mathcal{I}}\frac{\lambda_{\text{min}}(Q_i)}{\lambda_{\text{max}}(P_i)}
\end{equation}
and if the switching signal of the controlled PWA system obeys the dwell time constraint $\tau_D$ in (\ref{eqn: condition_tau}) with
\begin{equation}
    \mu \triangleq \max_{i,j \in \mathcal{I}}\frac{\lambda_{\text{max}}(P_i)}{\lambda_{\text{min}}(P_j)}.
\end{equation}
\end{theorem}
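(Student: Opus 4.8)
The plan is to exhibit a single barrier Lyapunov function that is non-increasing both along the flow on each dwell interval and across every reset of $\epsilon$. I would take
\[
V \;=\; \phi(\|e\|_P) + \sum_{i=1}^s \mathrm{tr}\!\big(\tilde{K}_{xi}^T M_i \tilde{K}_{xi}\big) + \sum_{i=1}^s \mathrm{tr}\!\big(\tilde{K}_{ri}^T M_i \tilde{K}_{ri}\big) + \sum_{i=1}^s \tilde{K}_{fi}^T M_i \tilde{K}_{fi},
\]
where the sums run over \emph{all} subsystems, so that the frozen inactive gains contribute nothing to $\dot V$ and the reset leaves the gain-error part of $V$ continuous. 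The hypothesis $\|e(t_0)\|_P<\epsilon(t_0)$ makes $V(t_0)$ finite, so $e(t_0)\in\mathcal{D}_{\epsilon(t_0)}$ and the barrier argument can be launched.

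\textbf{Flow on a dwell interval.} On $[t_k,t_{k+1})$ with subsystem $j$ active, only $\tilde K_{xj},\tilde K_{rj},\tilde K_{fj}$ move, and $\dot\phi = \phi_d\,\tfrac{d}{dt}\|e\|_P^2 - \tfrac{\|e\|_P^2}{\epsilon^2}\,\phi_d\,\tfrac{d}{dt}\epsilon^2$ (equivalently, $\phi$ depends only on $\|e\|_P^2/\epsilon^2$). Substituting $\dot e$ from (\ref{eqn: error_dyn}), the Lyapunov identity (\ref{eqn: lyap_eq}), the matching relations together with $(K_{ri}^*S_i)^{-1}=M_i$ (which give $B_i=B_{mi}S_iM_i$), and the adaptation laws (\ref{eqn: adaptation_law}), the cross terms in $\tilde K_{xj}x$, $\tilde K_{rj}r$, $\tilde K_{fj}$ cancel by the usual trace manipulation, leaving
\[
\dot V \;=\; \phi_d\Big[-\,e^T Q_j e \;+\; 2\|e\|_P^2\big(h-\tfrac{g}{\epsilon(t)}\big)\Big].
\]
By Lemma \ref{thm: eps_rho}, $\epsilon(t)\ge g/h$, so $h-g/\epsilon(t)\in[0,h)$ and hence $2\|e\|_P^2(h-g/\epsilon)\le 2h\,\lambda_{\mathrm{max}}(P_j)\|e\|_2^2 < \lambda_{\mathrm{min}}(Q_j)\|e\|_2^2 \le e^TQ_je$, where the middle inequality is exactly condition (\ref{eqn: condition_h}). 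Since $\phi_d>0$ this gives $\dot V\le 0$, so $\|e\|_P$ cannot reach $\epsilon$ on the interval.

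\textbf{Reset at $t_k$.} The estimates $K_{xi},K_{ri},K_{fi}$ do not jump and the nominal gains are constant, so the gain-error part of $V$ is continuous at $t_k$; only $\phi$ can change. With subsystem $i$ active before and $j$ after, the discontinuity bound displayed in Section \ref{sec: method_1} reads $\|e(t_k)\|_{P_j}^2\le \frac{\lambda_{\mathrm{max}}(P_j)}{\lambda_{\mathrm{min}}(P_i)}\|e(t_k^-)\|_{P_i}^2\le \mu\,\|e(t_k^-)\|_{P_i}^2$ with $\mu=\max_{i,j}\lambda_{\mathrm{max}}(P_i)/\lambda_{\mathrm{min}}(P_j)$, while the reset map gives $\epsilon^2(t_k)=\mu\,\epsilon^2(t_k^-)$. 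Since $s\mapsto s/(\mu b-s)$ is increasing and $\mu a/(\mu b-\mu a)=a/(b-a)$, taking $s=\|e(t_k)\|_{P_j}^2\le\mu\|e(t_k^-)\|_{P_i}^2=:\mu a$ and $b=\epsilon^2(t_k^-)$ yields $\phi(t_k)\le\phi(t_k^-)$ — and, by the same inequality, $\|e(t_k^-)\|_{P_i}<\epsilon(t_k^-)$ forces $\|e(t_k)\|_{P_j}<\epsilon(t_k)$, so the barrier stays valid. Thus $V$ is non-increasing for all $t\ge t_0$ and no additional dwell-time constraint is needed here; in particular $\phi(\|e(t)\|_P)\le V(t_0)<\infty$, i.e. $\|e(t)\|_P<\epsilon(t)$ on $[t_0,\infty)$. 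Combining with Lemma \ref{thm: eps_rho} ($\epsilon(t)<\rho(t)$ under (\ref{eqn: condition_tau})) proves (\ref{eqn: error_constraint}); boundedness of $V$ bounds all gain errors and hence all $K_{xi},K_{ri},K_{fi}$, $\|e\|_P<\rho_0$ bounds $e$, Lemma \ref{thm: ref_stability} bounds $x_m$, hence $x$ and $u$, giving closed-loop stability.

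I expect the crux to be the bookkeeping at the switching instants: verifying that the reset factor $\sqrt\mu$ is calibrated precisely so that the jump of $\|e\|_P^2$ is absorbed and $\phi$ does not increase, together with the flow-side subtlety that the time-varying barrier injects the sign-indefinite term $-\tfrac{\|e\|_P^2}{\epsilon^2}\phi_d\,\tfrac{d}{dt}\epsilon^2$ into $\dot V$ — which is exactly what the bound (\ref{eqn: condition_h}) on $h$ is designed to dominate. The remaining trace cancellations are routine MRAC computations.
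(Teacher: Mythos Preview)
Your proposal is correct and follows essentially the same route as the paper: the same Lyapunov function $V=\phi(\|e\|_P)+\sum_i\big(\mathrm{tr}(\tilde K_{xi}^TM_i\tilde K_{xi})+\mathrm{tr}(\tilde K_{ri}^TM_i\tilde K_{ri})+\tilde K_{fi}^TM_i\tilde K_{fi}\big)$, the same two-phase analysis (flow via the trace cancellations and the bound $|\dot\epsilon|/\epsilon\le h$ from Lemma~\ref{thm: eps_rho} to absorb the time-varying barrier term under (\ref{eqn: condition_h}); reset via $\|e(t_k)\|_P^2\le\mu\|e(t_k^-)\|_P^2$ together with $\epsilon^2(t_k)=\mu\epsilon^2(t_k^-)$ to get $\phi(t_k)\le\phi(t_k^-)$), and the same closing appeal to Lemmas~\ref{thm: eps_rho} and~\ref{thm: ref_stability}. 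Your flow computation is slightly sharper (you keep the exact factor $h-g/\epsilon$ rather than passing through $|\dot\epsilon|/\epsilon\le h$), but the argument is otherwise identical.
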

\begin{proof}
Consider the following Lyapunov function
\begin{equation}
\label{eqn: V}
    V=\phi(\|e\|_P)+\underbrace{\sum_{i=1}^s (\mathrm{tr}(\tilde{K}_{xi}^T M_{si} \tilde{K}_{xi}) + \mathrm{tr}(\tilde{K}_{ri}^T M_{si} \tilde{K}_{ri}) + \tilde{K}_{fi}^T M_{si} \tilde{K}_{fi})}_{\triangleq V_{\theta}}
\end{equation}
The stability analysis can be divided into two phases:

\textit{phase 1:} $t \in [t_{k-1},t_k), k\in \mathbb{N}^+$

$V$ is continuous in the intervals between two successive switches. Without loss of generality, we suppose that the $i$-th subsystem is activated for $t\in[t_{k-1},t_k)$ and $e(t_{k-1})$ satisfies $\|e(t_{k-1})\|_{P_i} < \epsilon(t_{k-1})$. The time-derivative of $V$ in $[t_{k-1},t_k)$ is given by
\begin{equation}
\label{eqn: V_dot}
\dot{V}=\dot{\phi}(\|e\|_{P_i})+  2\sum_{i=1}^s (\mathrm{tr}(\tilde{K}_{xi}^T M_{si} \dot{\tilde{K}}_{xi}) + \mathrm{tr}(\tilde{K}_{ri}^T M_{si} \dot{\tilde{K}}_{ri}) + \tilde{K}_{fi}^T M_{si} \dot{\tilde{K}}_{fi})
\end{equation}
First, we simplify the second term of $\dot{V}$. Taking the adaptation laws (\ref{eqn: adaptation_law}) into the first summand of the second term of $\dot{V}$ gives
\begin{align}
    \begin{split}
        \mathrm{tr}(\tilde{K}_{xi}^T M_{si} \dot{\tilde{K}}_{xi})
        =-\chi_i \phi_d \mathrm{tr}(\tilde{K}_{xi}^T M_{si} S^T B_{mi}^T P_i e x^T)
    \end{split}
\end{align}
Since $(K_{ri}^*S_i)^{-1}=M_i$ and $B_i K_{ri}^*=B_{mi}$, we have $M_{si} S^T B_{mi}^T=M_{si} S^T (B_i K_{ri}^*)^T=M_{si} M_{si}^{-1} B_{i}^T=B_i^T$, which further gives
\begin{align}
    \begin{split}
        \mathrm{tr}(\tilde{K}_{xi}^T M_{si} \dot{\tilde{K}}_{xi})
        &=-\chi_i \phi_d \mathrm{tr}(\tilde{K}_{xi}^T B_i^T P_i e x^T)\\
        &=-\chi_i \phi_d \mathrm{tr}(x e^T P_i B_i \tilde{K}_{xi})\\
        &=-\chi_i \phi_d \mathrm{tr}(e^T P_i B_i \tilde{K}_{xi} x)\\
        &=-\chi_i \phi_d e^T P_i B_i \tilde{K}_{xi} x
    \end{split}
\end{align}
Doing the same simplification for $\mathrm{tr}(\tilde{K}_{ri}^T M_{si} \dot{\tilde{K}}_{ri})$ and $\tilde{K}_{fi}^T M_{si} \dot{\tilde{K}}_{fi}$ we have
\begin{align}
    \begin{split}
        &2\sum_{i=1}^s (\mathrm{tr}(\tilde{K}_{xi}^T M_{si} \dot{\tilde{K}}_{xi}) + \mathrm{tr}(\tilde{K}_{ri}^T M_{si} \dot{\tilde{K}}_{ri}) + \tilde{K}_{fi}^T M_{si} \dot{\tilde{K}}_{fi})\\
        =-&2\sum_{i=1}^s \chi_i \phi_d e^T P_i B_i (\tilde{K}_{xi} x+\tilde{K}_{ri} r+\tilde{K}_{fi})
    \end{split}
\end{align}
$\dot{\phi}$ can be further simplified as 
\begin{align}
    \begin{split}
    \label{eqn: phi_dot_tmp1}
        \dot{\phi}=\frac{\partial \phi}{\partial \|e\|_{P_i}^2}\frac{\mathrm{d} \|e\|_{P_i}^2}{\mathrm{d} t}+\frac{\partial \phi}{\partial \epsilon}\dot{\epsilon}
        =2\phi_d(\|e\|_{P_i})e^T P_i \dot{e}+\frac{\partial \phi}{\partial \epsilon}\dot{\epsilon}
    \end{split}
\end{align}
Substituting $\dot{e}$ with (\ref{eqn: error_constraint}) yields
\begin{align}
    \begin{split}
    \label{eqn: phi_dot_tmp2}
        \dot{\phi}
        &=\phi_d(e^T (A_m^T P_i +P_i A_m) e + 2 e^T P_i\sum_{i=1}^s \chi_i B_i (\Tilde{K}_{xi} x + \Tilde{K}_{ri} r + \Tilde{K}_{fi}))
        +\frac{\partial \phi}{\partial \epsilon}\dot{\epsilon}\\
        &=-\phi_d e^T Q_i e + 2\sum_{i=1}^s \chi_i \phi_d e^T P_iB_i (\Tilde{K}_{xi} x + \Tilde{K}_{ri} r + \Tilde{K}_{fi})
        +\frac{\partial \phi}{\partial \epsilon}\dot{\epsilon}.
    \end{split}
\end{align}
Therefore, $\dot{V}$ can be simplified as
\begin{equation}
\label{eqn: V_dot_simplified}
    \dot{V}=-\phi_d e^T Q_i e
        +\frac{\partial \phi}{\partial \epsilon}\dot{\epsilon}
\end{equation}
with
\begin{align}
    \begin{split}
        \frac{\partial \phi}{\partial \epsilon}\dot{\epsilon}=\frac{-2\epsilon\|e\|_{P_i}^2}{(\epsilon^2-\|e\|_{P_i}^2)^2}\dot{\epsilon}=-2\phi_d(\|e\|_{P_i})\|e\|_{P_i}^2\frac{\dot{\epsilon}}{\epsilon} \leq 2\phi_d(\|e\|_{P_i})\|e\|_{P_i}^2\frac{|\dot{\epsilon}|}{\epsilon}.
    \end{split}
\end{align}
Invoking Lemma \ref{thm: eps_rho}, we have $\epsilon(t) \geq \frac{g}{h}, \forall t \in [t_0,\infty)$ and therefore, 
\begin{equation}
    \frac{|\dot{\epsilon}|}{\epsilon} = \frac{h\epsilon-g}{\epsilon}=h-\frac{g}{\epsilon} \leq h,
\end{equation}
which leads to
\begin{align}
    \begin{split}
        \frac{\partial \phi}{\partial \epsilon}\dot{\epsilon} \leq 2 h\phi_d(\|e\|_{P_i})\|e\|_{P_i}^2.
    \end{split}
\end{align}
Taking this into (\ref{eqn: V_dot_simplified}) yields
\begin{align}
    \begin{split}
        \dot{V}
        &\leq -\phi_d \|e\|_2^2 \lambda_{\text{min}}(Q_i)
        +2 h\phi_d\|e\|_2^2 \lambda_{\text{max}}(P_i)\\
        &= -\phi_d \|e\|_2^2 (\lambda_{\text{min}}(Q_i)-2 h\lambda_{\text{max}}(P_i)).
    \end{split}
\end{align}
From the condition (\ref{eqn: condition_h}) it follows $\lambda_{\text{min}}(Q_i)-2 h\lambda_{\text{max}}(P_i)>0$, which together with the property $2\phi_d(\|e\|_{P})\|e\|_{P}^2-\phi>0$ gives
\begin{align}
    \begin{split}
        \dot{V}
        &\leq -\frac{\lambda_{\text{min}}(Q_i)-2 h\lambda_{\text{max}}(P_i)}{2\lambda_{\text{max}}(P_i)}\phi\leq 0.
    \end{split}
\end{align}
The fact $\dot{V}\leq 0$ in intervals $[t_{k-1},t_{k}), k\in \mathbb{N}^+$ implies that the Lyapunov function decreases between two consecutive switches. $\phi$ and $\phi_d$ are bounded in $[t_{k-1},t_k)$. Since $\|e(t_{k-1})\|_{P_i}<\epsilon(t_{k-1})$, we have $\|e(t)\|_{P_i}<\epsilon(t)$ for $\forall t\in [t_{k-1},t_k)$.

\textit{phase 2: jump at switch instants $t_k, k\in \mathbb{N}^+$}

Now we analyse the behavior of the Lyapunov function at the switching time instants. Suppose that $i$-th subsystem is activated in $[t_{k-1},t_k)$ and $j$-th subsystem is activated in $[t_k,t_{k+1})$, where $i, j \in \mathcal{I}, i\neq j$. From the adaptation laws of the estimated controller gains (\ref{eqn: adaptation_law}), we see that the estimated controller gains are continuous and therefore $\tilde{K}_{xi}(t_k)=\tilde{K}_{xi}(t_k^-)$, $\tilde{K}_{ri}(t_k)=\tilde{K}_{ri}(t_k^-)$ and $\tilde{K}_{fi}(t_k)=\tilde{K}_{fi}(t_k^-)$ for $\forall i\in\mathcal{I}$, from which it follows $V_{\theta}(t_k^-)=V_{\theta}(t_k)$. To study the relationship between $V(t_k)$ and $V(t_k^-)$, it remains to analyse $\phi(\|e(t_k)\|_P)$ and $\phi(\|e(t_k^-)\|_P)$. Since $e(t)$ is also continuous, $e(t_k)=e(t_k^-)$. This results in
\begin{align}
    \begin{split}
    \label{eqn: e_P_norm_ineq}
        \|e(t_k)\|_P^2&=e^T(t_k)P_j e(t_k)
        \leq \lambda_{\text{max}}(P_j)\|e(t_k)\|^2\\
        &\leq \frac{\lambda_{\text{max}}(P_j)}{\lambda_{\text{min}}(P_i)}e^T(t_k)P_i e(t_k)
        =\frac{\lambda_{\text{max}}(P_j)}{\lambda_{\text{min}}(P_i)}\|e(t_k^-)\|_{P}^2
        \leq \mu \|e(t_k^-)\|_{P}^2.
    \end{split}
\end{align}
From the analysis of \textit{phase 1}, we already know that $\|e(t_k^-)\|_P < \epsilon(t_k^-)$. $\epsilon$ is reset at $t_k$ and we have
\begin{equation}
    \|e(t_k)\|_P \leq \sqrt{\mu}\|e(t_k^-)\|_{P} < \sqrt{\mu} \epsilon(t_k^-) = \epsilon(t_k), 
\end{equation}
which makes the potential function $\phi(\|e(t_k)\|_P)$ also valid at $t_k$. Recalling the dynamics of $\epsilon$ (\ref{eqn: eps_dyn}) and the above inequalities (\ref{eqn: e_P_norm_ineq}), we have
\begin{align}
    \begin{split}
        \phi(\|e(t_k)\|_P)
        =\frac{\|e(t_k)\|_P^2}{\epsilon^2(t_k)-\|e(t_k)\|_P^2}
        &\leq \frac{\mu \|e(t_k^-)\|_P^2}{\epsilon^2(t_k)-\mu\|e(t_k^-)\|_P^2}\\
        &=\frac{\mu \|e(t_k^-)\|_P^2}{\mu\epsilon^2(t_k^-)-\mu\|e(t_k^-)\|_P^2}
        =\phi(\|e(t_k^-)\|_P).
    \end{split}
\end{align}
Combining the facts $\phi(\|e(t_k)\|_P) \leq \phi(\|e(t_k^-)\|_P)$ and $V_{\theta}(t_k^-)=V_{\theta}(t_k)$, we have
\begin{equation}
    V(t_k)
    =\phi(\|e(t_k)\|_P)+V_{\theta}(t_k)
    \leq \phi(\|e(t_k^-)\|_P)+V_{\theta}(t_k^-)
    =V(t_k^-).
\end{equation}
Therefore, the Lyapunov function is non-increasing at every switching time instant. This together with the fact $\dot{V} \leq 0$ in $[t_k,t_{k+1})$ for $\forall k\in \mathbb{N}^+$ implies that $V(t)$ is non-increasing for $\forall t \in [t_0, \infty)$. The discontinuity of the Lyapunov function does not introduce extra dwell time constraints.

Combining the analysis of \textit{phase 1} and \textit{phase 2}, we have $\phi,\tilde{K}_{xi},\tilde{K}_{ri},\tilde{K}_{fi} \in \mathcal{L}_\infty$ and therefore $K_{xi},K_{ri},K_{fi} \in \mathcal{L}_\infty$. Besides, $\|e(t)\|_P < \epsilon(t) \leq \rho(t)$ holds for $\forall t\ \in [t_0,\infty)$. 

Invoking Lemma \ref{thm: ref_stability} we have $x_m \in \mathcal{L}_\infty$. $x_m \in \mathcal{L}_\infty$ and $\|e(t)\|_P < \epsilon(t) \leq \rho(t)$ lead to $x \in \mathcal{L}_\infty$, which together with $r, \phi_d \in \mathcal{L}_\infty$ implies $\dot{K}_{xi},\dot{K}_{ri},\dot{K}_{fi} \in \mathcal{L}_\infty$.
\end{proof}

Theorem \ref{thm: direct_stability} shows the tracking performance and the stability of the closed-loop system under the dwell time constraints (\ref{eqn: condition_tau}). Now we study the case with arbitrary switching. For the PWA reference systems with common Lyapunov matrix $P_0$, i.e., if positive definite matrices $P$ and $Q_i, i \in \mathcal{I}$ exist such that
\begin{equation}
\label{eqn: common_P}
    A_{mi}^T P + P A_{mi} < - Q_i, \quad i\in \mathcal{I},
\end{equation}
the error metric $\|e(t)\|_P$ exhibits no jumps at the switching instants. We can construct the potential function with the user-defined performance function directly
\begin{equation}
\label{eqn: phi_def}
    \phi_0(\|e\|_{P})=\frac{\rho^2}{\rho^2(t)-\|e\|_{P}^2}, \quad \|e\|_{P}<\rho(t).
\end{equation}
\begin{corollary}
For the reference PWA system (\ref{eqn: ref sys}) with a common Lyapunov matrix $P$, if the adaptation laws
\begin{align}
\begin{split}
\label{eqn: adaptation_law_commom_P}
    \dot{K}_{xi}&=-\chi_i \phi_{d0} S^T B_{mi}^T P e x^T\\
    \dot{K}_{ri}&=-\chi_i \phi_{d0} S^T B_{mi}^T P e r^T\\
    \dot{K}_{fi}&=-\chi_i \phi_{d0} S^T B_{mi}^T P e
\end{split}
\end{align}
are used with $\phi_{0d}\triangleq\frac{\partial \phi_0}{\partial \|e\|_P^2}$, and if the decaying rate of $\rho$ satisfies
\begin{equation}
\label{eqn: condition_h_common_Lyap}
    l < \frac{1}{2}\min_{i \in \mathcal{I}}\frac{\lambda_{\text{min}}(Q_i)}{\lambda_{\text{max}}(P)},
\end{equation}
the closed-loop system is stable under arbitrary switching and the state tracking error $e(t)$ satisfies the prescribed performance guarantees (\ref{eqn: error_constraint}).
\end{corollary}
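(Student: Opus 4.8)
The plan is to mirror the proof of Theorem~\ref{thm: direct_stability}, using the fact that a common Lyapunov matrix $P$ eliminates the jump of the error metric $\|e(t)\|_P$ at the switching instants, so that the two-phase argument collapses into a single continuous analysis on $[t_0,\infty)$ and no auxiliary bound $\epsilon$ is needed: the user-defined $\rho$ itself plays the role of the barrier. Throughout I would assume the (implicitly required) initialization $\|e(t_0)\|_P<\rho(t_0)$, which is what makes $\phi_0$ well defined at $t_0$.

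First I would take the Lyapunov candidate $V=\phi_0(\|e\|_P)+V_\theta$, with $V_\theta$ the same quadratic form in $\tilde K_{xi},\tilde K_{ri},\tilde K_{fi}$ as in (\ref{eqn: V}), and observe that $V$ is continuous on $[t_0,\infty)$: $\phi_0$ because both $\rho$ and (thanks to the common $P$) $\|e\|_P$ are continuous, and $V_\theta$ because the adaptation laws keep the gain errors continuous across switches while freezing the gains of inactive subsystems. Hence it suffices to check $\dot V\le 0$ on each interval $[t_{k-1},t_k)$ on which, say, subsystem $i$ is active. Differentiating, the choice of adaptation laws (\ref{eqn: adaptation_law_commom_P}) together with $M_{si}S_i^TB_{mi}^T=B_i^T$ (which follows from $B_iK_{ri}^*=B_{mi}$ and $(K_{ri}^*S_i)^{-1}=M_i=M_{si}$, exactly as in Theorem~\ref{thm: direct_stability}) makes the cross terms $2\chi_i\phi_{0d}\,e^TPB_i(\tilde K_{xi}x+\tilde K_{ri}r+\tilde K_{fi})$ produced by $\dot\phi_0$ cancel against the parameter-error terms, leaving $\dot V=\phi_{0d}\,e^T(A_{mi}^TP+PA_{mi})e+\frac{\partial\phi_0}{\partial\rho}\dot\rho$.

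Then I would bound the two remaining pieces. By (\ref{eqn: common_P}), $e^T(A_{mi}^TP+PA_{mi})e\le -e^TQ_ie\le -\lambda_{\text{min}}(Q_i)\|e\|_2^2$; and since $\frac{\partial\phi_0}{\partial\rho}\dot\rho=-2\phi_{0d}\|e\|_P^2\,\dot\rho/\rho$ with $-\dot\rho/\rho=l(\rho-\rho_\infty)/\rho\le l$, one gets $\frac{\partial\phi_0}{\partial\rho}\dot\rho\le 2l\,\phi_{0d}\lambda_{\text{max}}(P)\|e\|_2^2$. Combining, $\dot V\le -\phi_{0d}\|e\|_2^2\big(\lambda_{\text{min}}(Q_i)-2l\lambda_{\text{max}}(P)\big)\le 0$ by the rate condition (\ref{eqn: condition_h_common_Lyap}). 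Since this holds for every $i\in\mathcal I$ and every interval, and $V$ has no jumps, $V$ is non-increasing on $[t_0,\infty)$.

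Finally I would close the barrier argument in the usual way: as long as $\|e\|_P<\rho$, $\phi_0$ is finite and $\dot V\le 0$, so $V(t)\le V(t_0)$; this uniform bound on $\phi_0$ keeps $\|e(t)\|_P$ bounded away from $\rho(t)$, hence by continuity of the maximal solution $\|e(t)\|_P<\rho(t)$ can never be violated, which is (\ref{eqn: error_constraint}). The same bound gives $\phi_0,\tilde K_{xi},\tilde K_{ri},\tilde K_{fi}\in\mathcal L_\infty$ and therefore $K_{xi},K_{ri},K_{fi}\in\mathcal L_\infty$; a standard common-Lyapunov argument (with $x_m^TPx_m$) shows the reference model (\ref{eqn: ref sys}) is bounded under arbitrary switching, so $x_m\in\mathcal L_\infty$, whence $x=e+x_m\in\mathcal L_\infty$ and, with $r$ bounded, $\dot K_{xi},\dot K_{ri},\dot K_{fi}\in\mathcal L_\infty$. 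The one delicate point is this last self-referential step: one has to argue that the inequality $\|e\|_P<\rho$ is never lost (no finite escape before the metric could reach the barrier), not merely that $\dot V\le 0$ whenever that inequality happens to hold; everything else is a line-by-line transcription of the Theorem~\ref{thm: direct_stability} computation with $\epsilon\mapsto\rho$, $h\mapsto l$, $P_i\mapsto P$, and the switching-instant phase deleted.
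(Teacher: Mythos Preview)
Your proposal is correct and follows essentially the same route as the paper: the same Lyapunov candidate $V=\phi_0+V_\theta$, the same cancellation via $M_{si}S_i^TB_{mi}^T=B_i^T$, the same reduction to $\dot V=-\phi_{0d}e^TQ_ie+\frac{\partial\phi_0}{\partial\rho}\dot\rho$, and the same bound $|\dot\rho|/\rho\le l$ leading to $\dot V\le 0$ under (\ref{eqn: condition_h_common_Lyap}). If anything you are more careful than the paper about the continuation/barrier step and about invoking reference-model boundedness under arbitrary switching; the paper additionally pushes to $\dot V\le -c\,\phi_0$ via $2\phi_{0d}\|e\|_P^2>\phi_0$, but that extra refinement is not needed for the stated conclusions.
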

\begin{proof}
We propose the following Lyapunov function
\begin{equation}
\label{eqn: V_0}
    V=\phi_0(\|e\|_P)+\sum_{i=1}^s (\mathrm{tr}(\tilde{K}_{xi}^T M_{si} \tilde{K}_{xi}) + \mathrm{tr}(\tilde{K}_{ri}^T M_{si} \tilde{K}_{ri}) + \tilde{K}_{fi}^T M_{si} \tilde{K}_{fi}).
\end{equation}
$V$ is continuous not only within each interval $[t_k, t_{k+1}), k\in \mathbb{N}$ but also at switch instants $t_k, k\in \mathbb{N}$. So it is a common Lyapunov function. Taking its time derivative and inserting (\ref{eqn: adaptation_law_commom_P}) and (\ref{eqn: error_dyn}), we obtain
\begin{equation}
\label{eqn: V_0_dot_simplified}
    \dot{V}=-\phi_{d0} e^T (\sum_{i=1}^s \chi_i Q_i) e +\frac{\partial \phi_0}{\partial \rho}\dot{\rho}.
\end{equation}
Since $\frac{\partial \phi_0}{\partial \rho}\dot{\rho}\leq 2\phi_{d0}(\|e\|_P)\|e\|_P^2\frac{|\dot{\rho}|}{\rho}$ and $\frac{|\dot{\rho}|}{\rho}\leq l$, we have
\begin{align}
    \begin{split}
        \dot{V}
        &\leq -\phi_d \|e\|_2^2 \min_{i\in\mathcal{I}}\lambda_{\text{min}}(Q_i)
        +2l\phi_d\|e\|_2^2 \lambda_{\text{max}}(P)\\
        &= -\phi_d \|e\|_2^2 (\min_{i\in\mathcal{I}} \lambda_{\text{min}}(Q_i)-2l\lambda_{\text{max}}(P))\\
        &\leq -\frac{\min_{i\in\mathcal{I}}\lambda_{\text{min}}(Q_i)-2l\lambda_{\text{max}}(P)}{2\lambda_{\text{max}}(P)}\phi\leq 0
    \end{split}
\end{align}
given that (\ref{eqn: condition_h_common_Lyap}) holds. $\dot{V} \leq 0$ is negative semidefinite. Therefore, we have $\phi,\tilde{K}_{xi},\tilde{K}_{ri},\tilde{K}_{fi} \in \mathcal{L}_\infty$ for arbitrary switching. The boundedness of $\tilde{K}_{xi},\tilde{K}_{ri},\tilde{K}_{fi}$ implies $K_{xi},K_{ri},K_{fi} \in \mathcal{L}_\infty$. Furthermore, $\|e(t)\|_P < \epsilon(t) \leq \rho(t)$ holds for $\forall t\ \in [t_0,\infty)$. This leads to $x \in \mathcal{L}_\infty$, which together with $r, \phi_d \in \mathcal{L}_\infty$ implies that $\dot{K}_{xi},\dot{K}_{ri},\dot{K}_{fi} \in \mathcal{L}_\infty$.
\end{proof}
\begin{remark}
The classical MRAC approaches for PWL and PWA systems \cite{sang2012adaptive1,kersting2017direct} suggest using $e^T(\sum_{i=1}^s \chi_i P_i) e$ as the error-related term (the first summand) of the Lyapunov function $V$. This leads to potential increase of $V$ at switching instants. The dwell time constraints are then derived by formulating an inequality in form of $\dot{V} < -\alpha V+\beta$ for some constant $\alpha,\beta>0$ to keep $V$ exponentially decreasing in between the switches. To achieve this, the projection operator needs to be introduced (see \cite{sang2012adaptive1}) or the input signal must be PE (see \cite{kersting2017direct}) in the disturbance-free case. One key feature of our approach is that the Lyapunov function $V$ is non-increasing even at the switching instants and does not impose dwell time constraints. This omits the need of introducing projection or PE condition in the disturbance-free case. 
\end{remark}

\section{Robust Adaptive Control}
In Section \ref{sec: method}, the adaptive control approach and the stability of the closed-loop systems are studied in the disturbance-free case. Since the PWA systems are commonly used as the approximation of nonlinear systems, approximation errors exist. Besides, unmodeled dynamics and external disturbances cannot be neglected in real applications. In this section, we focus on the robust adaptive control design for PWA systems with approximation errors, unmodeled dynamics, and external disturbances, i.e., we consider 
\begin{align}
\begin{split}
\label{eqn: plant_ss_with_d}
    \dot{x}(t)=A(t) x(t) + B(t) u(t) + f(t) + d(x,u,t),
\end{split}
\end{align}
where $d(x,u,t) \in \mathbb{R}^n$ can denote the approximation error of the linearization, unmodeled dynamics or external disturbances. $d$ is continuous and its norm is upper bounded, i.e., $\|d\|_2\leq \bar{d}$, where $\bar{d}$ is known.

We propose the following robust adaptation laws
\begin{align}
\begin{split}
\label{eqn: adaptation_law_robust}
    \dot{K}_{xi}&=-\chi_i \phi_d S^T B_{mi}^T P_i e x^T + \chi_i F_{xi}\\
    \dot{K}_{ri}&=-\chi_i \phi_d S^T B_{mi}^T P_i e r^T + \chi_i F_{ri}\\
    \dot{K}_{fi}&=-\chi_i \phi_d S^T B_{mi}^T P_i e + \chi_i F_{oi}
\end{split}
\end{align}
where $F_{xi} \in \mathbb{R}^{p \times n}, F_{ri} \in \mathbb{R}^{p \times p}, F_{0i} \in \mathbb{R}^{p}$ represent the projection terms to confine the estimated controller gains $K_{xi}, K_{ri}, K_{fi}$ within some given bounds. The projection terms have no effect on the adaptation if $K_{xi}, K_{ri}, K_{fi}$ are within their bounds, otherwise, the adaptation terminates. $S_i \in \mathbb{R}^{p\times p}$ is a matrix such that there exists a diagonal and positive definite matrix $M_i \in \mathbb{R}^{p\times p}$ with $(K_{ri}^*S_i)^{-1}=M_i$. 
\begin{remark}
For the robust adaptive control design, more prior information is required compared with the disturbance-free case. For our projection-based approach, $M_i$ must be diagonal and the element-wise bounds of $K_{xi}, K_{ri}, K_{fi}$ need to be known (see also \cite{sang2011adaptive}). The leakage-based approach proposed in \cite{yuan2018robust} requires $M_i$ to be completely known because they are used in the leakage terms. Its improved version in \cite{tao2020issue} requires $\lambda_{max}(M_i^{-1})$ to satisfy some constraints associated with the leakage rates.
\end{remark}
\begin{remark}
There is another popular formulation $\dot{x}=A_p x + B_p \Lambda u$ appearing in many works inspired by aerospace applications \cite{lavretsky2011adaptive, arabi2019neuroadaptive, arabi2019set}, where $B_p$ is known and $\Lambda$ is an unknown diagonal matrix with strictly positive diagonal elements. Such arrangement of the input matrix is equivalent to our requirement that $M_i$ must be diagonal and positive definite.
\end{remark}

Besides, we assume that positive definite matrices $P_i, Q_i, i \in \mathcal{I}$ exist such that
\begin{equation}
\label{eqn: lyap_eq2}
    A_{mi}^T P_i + P_i A_{mi} + P_i< - Q_i, \quad i\in \mathcal{I}.
\end{equation}

Before we proceed with the robustness analysis, another property of the potential function, which is useful for the analysis in this paper, is given in the following lemma.
\begin{lemma}
\label{thm: phi_ineq}
For a positive constant $c \in \mathbb{R}^+$ and $c < \mathrm{min}_{t} \epsilon^2(t)$, the function $\phi(\|e\|_P)$ defined in (\ref{eqn: phi_def}) and its partial derivative $\phi_d$ with respect to $\|e\|_P$ satisfy
\begin{enumerate}
    \item[(1)] $2\phi_d \cdot (\|e\|_P^2-c)-\phi>0$ for $\zeta <\|e\|_P^2 <\epsilon^2$\\
    \item[(2)] $2\phi_d \cdot (\|e\|_P^2-c)-\phi\leq 0$ for $\|e\|_P^2 \leq \zeta$
\end{enumerate}
with $\zeta \triangleq \frac{-\epsilon^2+\sqrt{\epsilon^4+4\epsilon^2 c}}{2}$.
\end{lemma}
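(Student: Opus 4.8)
\textbf{Proof proposal for Lemma~\ref{thm: phi_ineq}.}

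The plan is to reduce both inequalities to a single scalar inequality in the variable $y\triangleq\|e\|_P^2$ and then solve a quadratic. First I would write out the two quantities explicitly in terms of $y$ and $\epsilon^2$: from the definition $\phi=y/(\epsilon^2-y)$ and $\phi_d=\epsilon^2/(\epsilon^2-y)^2$, so that
\begin{align}
\begin{split}
2\phi_d(y-c)-\phi
&=\frac{2\epsilon^2(y-c)}{(\epsilon^2-y)^2}-\frac{y}{\epsilon^2-y}\\
&=\frac{2\epsilon^2(y-c)-y(\epsilon^2-y)}{(\epsilon^2-y)^2}
=\frac{y^2+\epsilon^2 y-2\epsilon^2 c}{(\epsilon^2-y)^2}.
\end{split}
\end{align}
Since $y<\epsilon^2$ on $\mathcal{D}_\epsilon$ the denominator $(\epsilon^2-y)^2$ is strictly positive, so the sign of $2\phi_d(y-c)-\phi$ is exactly the sign of the numerator $N(y)\triangleq y^2+\epsilon^2 y-2\epsilon^2 c$.

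Next I would analyze $N(y)$, an upward-opening parabola in $y$ with $N(0)=-2\epsilon^2 c<0$. Its roots are $y=\tfrac12\bigl(-\epsilon^2\pm\sqrt{\epsilon^4+8\epsilon^2 c}\bigr)$; the negative root is irrelevant since $y\ge 0$, and the positive root is exactly $\zeta$ as defined in the statement — here I should double-check the constant under the radical, since the lemma writes $\sqrt{\epsilon^4+4\epsilon^2 c}$ whereas completing the square on $N$ gives $\sqrt{\epsilon^4+8\epsilon^2 c}$; this discrepancy (a factor inside the root, perhaps from a differing normalization of $\phi$ or a typo) is the one place the argument needs care, and I would state $\zeta$ as the unique positive root of $N$ and verify it matches the claimed closed form. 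Granting that, $N(y)<0$ for $0\le y<\zeta$, $N(\zeta)=0$, and $N(y)>0$ for $\zeta<y<\epsilon^2$. Translating back through the positive denominator: $2\phi_d(y-c)-\phi>0$ for $\zeta<\|e\|_P^2<\epsilon^2$, which is claim~(1), and $2\phi_d(y-c)-\phi\le 0$ for $\|e\|_P^2\le\zeta$, which is claim~(2). The hypothesis $c<\min_t\epsilon^2(t)$ guarantees $\zeta<\epsilon^2$ (equivalently $N(\epsilon^2)=2\epsilon^4-2\epsilon^2 c=2\epsilon^2(\epsilon^2-c)>0$), so the interval $(\zeta,\epsilon^2)$ in claim~(1) is nonempty and the two cases together cover all admissible $\|e\|_P$.

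The only genuine obstacle is the bookkeeping of the constant inside the square root defining $\zeta$; everything else is a routine sign analysis of a quadratic. I would present the computation of $N(y)$, identify $\zeta$ as its positive root, and then read off both inequalities from the sign of $N$ on the two subintervals, noting that $c<\min_t\epsilon^2(t)$ is exactly what places $\zeta$ strictly inside $(0,\epsilon^2)$.
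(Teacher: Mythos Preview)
Your proposal is correct and follows essentially the same route as the paper: compute $2\phi_d(\|e\|_P^2-c)-\phi$ as a single rational expression, observe the denominator $(\epsilon^2-y)^2$ is positive, and read off the sign from the quadratic numerator $N(y)=y^2+\epsilon^2 y-2\epsilon^2 c$. Your flag on the discriminant is also well taken: the positive root of $N$ is $\tfrac12\bigl(-\epsilon^2+\sqrt{\epsilon^4+8\epsilon^2 c}\bigr)$, so the $4\epsilon^2 c$ appearing under the radical in the stated $\zeta$ (and reproduced in the paper's own proof) is a typo and should be $8\epsilon^2 c$.
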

\begin{proof}
From the definition of $\phi$ given in (\ref{eqn: phi_def}) we have
\begin{align}
    \begin{split}
    \label{eqn: lemma2_tmp}
        2\phi_d\cdot (\|e\|_P^2-c)-\phi
        &=\frac{\|e\|_P^4+\epsilon^2 \|e\|_P^2-2c\epsilon^2}{(\epsilon^2-\|e\|_P^2)^2}.
    \end{split}
\end{align}
The denominator of (\ref{eqn: lemma2_tmp}) is positive and the sign of $2\phi_d \cdot (\|e\|_P^2-c)-\phi$ is determined by the numerator, which can be viewed as a quadratic function $f(z)=z^2+\epsilon^2 z-2c\epsilon^2$ with $z=\|e\|_P^2$. We have $f(z)\leq 0$ for $z \in [\frac{-\epsilon^2-\sqrt{\epsilon^4+4\epsilon^2 c}}{2},\frac{-\epsilon^2+\sqrt{\epsilon^4+4\epsilon^2 c}}{2}]$ and $f(z) > 0$ otherwise. Since $\phi, \phi_d$ are defined over $\|e\|_P^2 \in [0, \epsilon^2)$ and $\frac{-\epsilon^2-\sqrt{\epsilon^4+4\epsilon^2 c}}{2}<0$, it can be obtained that $2\phi_d(\|e\|_P^2-c)-\phi>0$ for $\zeta <\|e\|_P^2 <\epsilon^2$ and $2\phi_d(\|e\|_P^2-c)-\phi\leq 0$ for $\|e\|_P^2 \leq \zeta$ with $\zeta=\frac{-\epsilon^2+\sqrt{\epsilon^4+4\epsilon^2 c}}{2}$.
\end{proof}
\begin{theorem}
\label{thm: robust_stability}
Given the reference PWA system (\ref{eqn: ref sys}) and the predefined performance function (\ref{eqn: performance_func}), let the PWA system (\ref{eqn: plant_ss}) with known regions $\Omega_i, i\in \mathcal{I}$ and unknown subsystem parameters $A_i, B_i, f_i, i\in \mathcal{I}$ be controlled by the feedback controller (\ref{eqn: controller_adaptive}) with the adaptation laws (\ref{eqn: adaptation_law_robust}). Let the initial state of $\epsilon$ satisfies $\|e(t_0)\|_P<\epsilon(t_0)$. The closed-loop system is stable and the state tracking error $e(t)$ satisfies the prescribed performance guarantees (\ref{eqn: error_constraint}) if the time constant $h$ in (\ref{eqn: eps_dyn}) satisfies 
\begin{align}
\begin{split}
\label{eqn: condition_h_2}
    h< 
    \frac{1}{2}\min_{i \in \mathcal{I}}\frac{\lambda_{\text{min}}(Q_i)}{\lambda_{\text{max}}(P_i)},\\
    \max_{i \in \mathcal{I}} \frac{\lambda_{\text{max}}(P_i)\bar{d}}{\sqrt{\lambda_{\mathrm{min}}(Q_i)-2h\lambda_{\mathrm{max}}(P_i)}}
    < \frac{h}{g},
\end{split}
\end{align}
and if the switching signal of the controlled PWA system obeys the dwell time constraint $\tau_D$ in (\ref{eqn: condition_tau}) with
\begin{equation}
    \mu = \max_{i,j \in \mathcal{I}}\frac{\lambda_{\text{max}}(P_i)}{\lambda_{\text{min}}(P_j)}.
\end{equation}
\end{theorem}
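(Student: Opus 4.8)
The plan is to follow the two-phase structure of the proof of Theorem~\ref{thm: direct_stability}, using the same Lyapunov function $V=\phi(\|e\|_P)+V_\theta$, but now carrying along the disturbance term $d$ and exploiting the projection terms $F_{xi},F_{ri},F_{0i}$ together with Lemma~\ref{thm: phi_ineq}. For \emph{phase~1}, on an interval $[t_{k-1},t_k)$ where subsystem $i$ is active, I would recompute $\dot V$ exactly as before; the adaptation-law cancellation still occurs on the nominal part of the adaptation laws, the projection terms contribute a nonpositive quantity by the standard projection inequality (since $K_{xi}^*,K_{ri}^*,K_{fi}^*$ lie inside the projection sets), and the new ingredient is the cross term $2\phi_d\,e^T P_i d$ coming from $\dot e = A_m e + (\cdots) + d$. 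Using (\ref{eqn: lyap_eq2}) in place of (\ref{eqn: lyap_eq}) the quadratic part becomes $-\phi_d e^T Q_i e - \phi_d \|e\|_P^2$, and the extra $-\phi_d\|e\|_P^2$ is precisely what is needed to dominate the disturbance cross term after a Young/Cauchy--Schwarz split, $2\phi_d e^T P_i d \le \phi_d\|e\|_P^2 + \phi_d \lambda_{\max}(P_i)\bar d^{\,2}\cdot(\text{const})$ — more carefully, bounding $\|e\|_{P_i}\le \sqrt{\lambda_{\max}(P_i)}\|e\|_2$ and $\|d\|_2\le\bar d$.

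The core estimate will be to show that $\dot V \le -\gamma \phi$ whenever $\|e\|_P^2$ exceeds the threshold $\zeta$ from Lemma~\ref{thm: phi_ineq}, for a suitable constant $c$ built from $\bar d$ and the gap $\lambda_{\min}(Q_i)-2h\lambda_{\max}(P_i)>0$ (which is positive by the first line of (\ref{eqn: condition_h_2})). Concretely, after collecting terms $\dot V \le -\phi_d\big[(\lambda_{\min}(Q_i)-2h\lambda_{\max}(P_i))\|e\|_2^2 - \lambda_{\max}(P_i)^2\bar d^{\,2}/(\cdots)\big]$, I would rewrite the bracket as a constant times $(\|e\|_P^2 - c)$ with $c=\max_i \lambda_{\max}(P_i)^2\bar d^{\,2}/(\lambda_{\min}(Q_i)-2h\lambda_{\max}(P_i))^2$ or similar, then invoke Lemma~\ref{thm: phi_ineq}(1): for $\|e\|_P^2>\zeta$ we get $2\phi_d(\|e\|_P^2-c)-\phi>0$, hence $\dot V<0$; for $\|e\|_P^2\le\zeta$ the state is already in a small residual set and $V$ is trivially bounded there. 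This yields boundedness of $\phi$, hence $\|e(t)\|_{P_i}<\epsilon(t)$ on each interval, provided one checks $c<\min_t\epsilon^2(t)=(g/h)^2$ — and this is exactly what the second inequality in (\ref{eqn: condition_h_2}), $\lambda_{\max}(P_i)\bar d/\sqrt{\lambda_{\min}(Q_i)-2h\lambda_{\max}(P_i)}<h/g$, guarantees, so Lemma~\ref{thm: phi_ineq} applies with room to spare.

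For \emph{phase~2}, the switching analysis is essentially unchanged from Theorem~\ref{thm: direct_stability}: the estimated gains and $e$ are continuous across $t_k$, so $V_\theta(t_k^-)=V_\theta(t_k)$, the bound (\ref{eqn: e_P_norm_ineq}) gives $\|e(t_k)\|_P^2\le\mu\|e(t_k^-)\|_P^2$ with $\mu=\max_{i,j}\lambda_{\max}(P_i)/\lambda_{\min}(P_j)$, the reset $\epsilon(t_k)=\sqrt\mu\,\epsilon(t_k^-)$ keeps $\phi$ valid, and the same algebraic cancellation shows $\phi(\|e(t_k)\|_P)\le\phi(\|e(t_k^-)\|_P)$, hence $V(t_k)\le V(t_k^-)$. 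Then Lemma~\ref{thm: eps_rho} (its hypotheses $h>l$, $\rho_\infty>\sqrt\mu g/h$, and the dwell time (\ref{eqn: condition_tau})) gives $\epsilon(t)<\rho(t)$ for all $t$, so $\|e(t)\|_P<\epsilon(t)\le\rho(t)$, i.e.\ (\ref{eqn: error_constraint}) holds; Lemma~\ref{thm: ref_stability} gives $x_m\in\mathcal L_\infty$, whence $x\in\mathcal L_\infty$, and the projection keeps $K_{xi},K_{ri},K_{fi}\in\mathcal L_\infty$, closing the boundedness argument. The main obstacle I anticipate is the bookkeeping in phase~1: correctly splitting the disturbance cross term so that the leftover is \emph{exactly} of the form $\text{const}\cdot(\|e\|_P^2-c)$ with a $c$ small enough to feed into Lemma~\ref{thm: phi_ineq}, and verifying that the stated condition on $h$ is the precise translation of $c<(g/h)^2$ — the $|\dot\epsilon|/\epsilon\le h$ step from Theorem~\ref{thm: direct_stability} must be reused here and its interaction with the new $-\phi_d\|e\|_P^2$ term tracked carefully.
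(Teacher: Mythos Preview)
Your proposal is correct and follows essentially the same route as the paper: the same Lyapunov function $V=\phi+V_\theta$, the same two-phase (inter-switch / at-switch) decomposition, the projection inequality to discard the $F$-terms, the Young-type split of $2\phi_d e^T P_i d$ against the extra $-\phi_d\|e\|_P^2$ furnished by (\ref{eqn: lyap_eq2}), and then Lemma~\ref{thm: phi_ineq} with the threshold $\zeta$ to split into the two cases, with the second condition in (\ref{eqn: condition_h_2}) encoding exactly $c<\min_t\epsilon^2(t)$. Two small corrections to your bookkeeping: the constant is $c=\max_i \lambda_{\max}^2(P_i)\bar d^{\,2}/(\lambda_{\min}(Q_i)-2h\lambda_{\max}(P_i))$ (denominator \emph{not} squared), and in the region $\|e\|_P^2\le\zeta$ the paper does not rely on $\phi$ being ``trivially bounded'' but derives a differential inequality $\dot V\le -\tfrac{\kappa}{2\alpha}V+\text{const}$ (using the projection bound on $V_\theta$ and $\phi_d\le\phi_d(\zeta)$), which yields the uniform bound $V(t)\le\max\{V(t_0),\mathcal{B}\}$ across both cases---your simpler observation that $\phi(\|e\|_P)\le\phi(\sqrt\zeta)$ and $V_\theta$ is projection-bounded also suffices.
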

\begin{proof}
We propose the same Lyapunov function as (\ref{eqn: V}). The stability analysis can also be divided into two phases as the one in Theorem \ref{thm: direct_stability}.

\textit{phase 1: $t \in [t_{k-1}, t_k), k\in\mathbb{N}^+$}

Suppose that $i$-th subsystem is activated for $[t_{k-1},t_k)$, the time-derivative of $V$ in $[t_{k-1},t_k)$ is the same as shown in (\ref{eqn: V_dot}). 
Following the same steps as (\ref{eqn: phi_dot_tmp1}) and (\ref{eqn: phi_dot_tmp2}), we have
\begin{align}
    \begin{split}
    \label{eqn: phi_dot_tmp3}
        \dot{\phi}
        =-\phi_d e^T (A_{mi}^T P_i +P_i A_{mi}) e + 2 &\phi_d e^T P_iB_i (\Tilde{K}_{xi} x + \Tilde{K}_{ri} r + \Tilde{K}_{fi})\\
        &+\frac{\partial \phi}{\partial \epsilon}\dot{\epsilon}+\phi_d(e^T P_i d + d^T P_i e).
    \end{split}
\end{align}
Taking the adaptation laws (\ref{eqn: adaptation_law_robust}) into $\dot{V}$ yields
\begin{align}
\begin{split}
    \dot{V}=-\phi_d e^T (A_{mi}^T P_i +&P_i A_{mi}) e +\phi_d(e^T P_i d + d^T P_i e)+\frac{\partial \phi}{\partial \epsilon}\dot{\epsilon}\\
    &+ 2 \phi_d (\mathrm{tr}(\tilde{K}_{xi}^T M_{si} F_{xi})+\mathrm{tr}(\tilde{K}_{ri}^T M_{si} F_{ri})+\tilde{K}_{fi}^T M_{si} F_{0i})
\end{split}
\end{align}
Since $M_{si}$ is diagonal, we have
\begin{align}
\begin{split}
    &\phi_d (\mathrm{tr}(\tilde{K}_{xi}^T M_{si} F_{xi})+\mathrm{tr}(\tilde{K}_{ri}^T M_{si} F_{ri})+\tilde{K}_{fi}^T M_{si} F_{0i})\\
    =&\phi_d (\sum_{j=1}^p \sum_{l=1}^n m_{i}^{(j)}\tilde{k}_{xi}^{(jl)}f_{xi}^{(jl)}+\sum_{j=1}^p \sum_{l=1}^p m_{i}^{(j)}\tilde{k}_{ri}^{(jl)}f_{ri}^{(jl)}+\sum_{j=1}^p m_{i}^{(j)}\tilde{k}_{fi}^{(j)}f_{0i}^{(j)})
\end{split}
\end{align}
with $\tilde{K}_{xi}=[\tilde{k}_{xi}^{(jl)}]$, $\tilde{K}_{ri}=[\tilde{k}_{ri}^{(jl)}]$ and $\tilde{K}_{fi}=[\tilde{k}_{fi}^{(j)}]$. $M_{si}=\mathrm{diag}(m_i^{(1)},\cdots,m_i^{(p)})$. It can be verified that $\tilde{k}_{xi}^{(jl)}f_{xi}^{(jl)}\leq 0$, $\tilde{k}_{ri}^{(jl)}f_{ri}^{(jl)}\leq 0$ and $\tilde{k}_{fi}^{(jl)}f_{0i}^{(jl)}\leq 0$, which together with the fact that $m_i^{(j)}>0$ leads to
\begin{align}
\begin{split}
\label{eqn: V_dot_tmp}
    \dot{V}\leq -\phi_d e^T (A_{mi}^T P_i +P_i A_{mi}) e +\frac{\partial \phi}{\partial \epsilon}\dot{\epsilon}+\phi_d(e^T P_i d + d^T P_i e).
\end{split}
\end{align}
Since $P_i$ is positive definite, it can be written as $P_i=H_i H_i^T$ with $H_i$ being a nonsingular matrix. The inequality (\ref{eqn: V_dot_tmp}) can be further transformed as
\begin{align}
    \begin{split}
    \label{eqn: V_dot_tmp2}
        \dot{V}
        &\leq -\phi_d e^T (A_{mi}^T P_i +P_i A_{mi}) e +\frac{\partial \phi}{\partial \epsilon}\dot{\epsilon}+2\phi_d e^T H_i H_i^T d\\
        &\leq -\phi_d e^T (Q_i+P_i) e +\frac{\partial \phi}{\partial \epsilon}\dot{\epsilon}+\phi_d(e^T H_i H_i^T e + d^T H_i H_i^T d)\\
        &=-\phi_d e^T Q_i e +\frac{\partial \phi}{\partial \epsilon}\dot{\epsilon}+\phi_d d^T H_i H_i^T d\\
        &\leq -\phi_d \|e\|_2^2(\lambda_{\mathrm{min}}(Q_i)-2h\lambda_{\mathrm{max}}(P_i)) +\phi_d d^T P_i d\\
        &\leq -\phi_d \|e\|_2^2\kappa_i +\phi_d \lambda_{\mathrm{max}}(P_i){\bar{d}}^2
    \end{split}
\end{align}
with $\kappa_i \triangleq \lambda_{\mathrm{min}}(Q_i)-2h\lambda_{\mathrm{max}}(P_i)$. For $P_i, Q_i$ and $h$ satisfying the condition (\ref{eqn: condition_h_2}), we have $\kappa_i>0$. Further analysis can be divided into two cases: $\|e\|_P^2 > \zeta$ and $\|e\|_P^2 \leq \zeta$, where
\begin{equation}
    \zeta=\frac{-\epsilon^2 + \sqrt{\epsilon^4+4 \epsilon^2 c}}{2}
\end{equation}
with $c\triangleq\sup_{i\in \mathcal{I}}\{\frac{\lambda^2_{\mathrm{max}}(P_i)}{\kappa_i}{\bar{d}}^2\}$. From (\ref{eqn: condition_h_2}) we obtain
\begin{equation}
    \epsilon(t)^2 \geq \frac{h^2}{g^2} > 
    \max_{i \in \mathcal{I}} \frac{ \lambda^2_{\text{max}}(P_i){\bar{d}}^2}{\lambda_{\mathrm{min}}(Q_i)-2h\lambda_{\mathrm{max}}(P_i)}
    =\max_{i\in \mathcal{I}}\{\frac{\lambda^2_{\mathrm{max}}(P_i)}{\kappa_i}{\bar{d}}^2\}=c,
\end{equation}
which further leads to
\begin{equation}
    \zeta < \frac{-\epsilon^2 + \sqrt{\epsilon^4+4 \epsilon^2 \cdot \epsilon^2}}{2}=\frac{(\sqrt{5}-1)\epsilon^2}{2}<\epsilon^2.
\end{equation}

\textit{Case 1 $\|e\|_P^2 > \zeta$}: invoking Lemma \ref{thm: phi_ineq}, inequality (\ref{eqn: V_dot_tmp2}) can be further derived as
\begin{align}
    \begin{split}
    \label{eqn: V_dot_tmp3}
        \dot{V}
        \leq -\frac{\kappa_i\phi_d}{\lambda_{\mathrm{max}}(P_i)}(\|e\|_P^2-\frac{\lambda^2_{\mathrm{max}}(P_i)}{\kappa_i}{\bar{d}}^2) <-\frac{\kappa_i}{2\lambda_{\mathrm{max}}(P_i)}\phi < 0
    \end{split}
\end{align}
 
 \textit{Case 2 $\|e\|_P^2 \leq \zeta$}: defining $\kappa\triangleq \mathrm{min}_{i\in \mathcal{I}}\{\kappa_i\}$, $\alpha=\mathrm{max}_{i\in \mathcal{I}}\lambda_{\mathrm{max}}(P_i)$ and considering the property that $2\phi_d(\|e\|_{P})\|e\|_{P}^2-\phi>0$, we have
\begin{align}
    \begin{split}
        \dot{V}
        &\leq -\frac{\kappa}{2\alpha}\phi +\phi_d \alpha{\bar{d}}^2\\
        &=-\frac{\kappa}{2\alpha}(\phi+V_{\theta}) +\frac{\kappa}{2\alpha}V_{\theta}+\phi_d \alpha{\bar{d}}^2\\
        &\leq -\frac{\kappa}{2\alpha}V +\frac{\kappa}{2\alpha}V_{\theta}+\phi_{d_{\mathrm{max}}} \alpha{\bar{d}}^2
    \end{split}
\end{align}
with $\phi_{d_{\mathrm{max}}}=\mathrm{max}_{\|e\|_P^2\leq \zeta}\phi_d(\|e\|_P^2)=\phi_d(\zeta)\in \mathcal{L}_{\infty}$. $V_{\theta}$ is defined in (\ref{eqn: V}). $\tilde{K}_{xi},\tilde{K}_{ri},\tilde{K}_{fi}$ are bounded due to the utilization of the projection, which leads to $V_{\theta} \in \mathcal{L}_{\infty}$. Let the positive number $\mathcal{B} \in \mathbb{R}^+$ be defined as
\begin{equation}
    \mathcal{B}\triangleq V_{\theta}+\frac{2\phi_{d_{\mathrm{max}}} \alpha^2 {\bar{d}}^2}{\kappa}.
\end{equation}
For $V \leq \mathcal{B}$, $V$ may increase. For $V > \mathcal{B}$, we have $\dot{V} <0$ and therefore, $V$ is decreasing. Combing \textit{Case 1} and \textit{Case 2}, we know that $V$ is bounded. 

\textit{phase 2: jump at switch instants $t_k, k\in \mathbb{N}^+$}

Following the same analysis as the one shown in Theorem \ref{thm: direct_stability} we have $V(t_k) \leq V(t_k^-)$. The Lyapunov function is non-increasing at each switching instant.

Based on the analysis of phase 1 and phase 2, we can conclude that 
\begin{equation}
    V(t) \leq \mathrm{max}\{V(t_0), \mathcal{B}\}, \forall t \in [t_0,\infty),
\end{equation}
from which we obtain $\phi, \phi_d \in \mathcal{L}_\infty$. The projection leads to $\tilde{K}_{xi},\tilde{K}_{ri},\tilde{K}_{fi} \in \mathcal{L}_\infty$ and therefore $K_{xi},K_{ri},K_{fi} \in \mathcal{L}_\infty$. Besides, $\|e(t)\|_P < \epsilon(t) \leq \rho(t)$ holds for $\forall t\ \in [t_0,\infty)$. The prescribed performance guarantee (\ref{eqn: error_constraint}) is satisfied. 


With the similar steps in the proof of Lemma \ref{thm: ref_stability}, one can prove the stability of the reference system satisfying (\ref{eqn: lyap_eq}), so we have $x_m\in\mathcal{L}_\infty$. This leads to $x \in \mathcal{L}_\infty$, which together with $r, \phi_d \in \mathcal{L}_\infty$ implies $\dot{K}_{xi},\dot{K}_{ri},\dot{K}_{fi} \in \mathcal{L}_\infty$.

\begin{remark}
In works about set-theoretic MRAC \cite{arabi2018set,arabi2019set,arabi2019neuroadaptive}, the uncertainties are feed into the system through the same input matrix as the control signal. A fault tolerant set-theoretic MRAC approach proposed in \cite{xiao2019robust} also assumes the actuator fault and external disturbances to be matched, i.e., they can be compensated by designing additive terms in the control signal. Compared with these works, a distinctive feature of this paper is that the disturbance term $d$ is also allowed to be unmatched.
\end{remark}
\end{proof}

\begin{minipage}{\textwidth}
\begin{minipage}[b]{0.6\textwidth}
\begin{tikzpicture}[every node/.style={outer sep=0pt},thick,
 mass/.style={draw,thick},
 spring/.style={thick,decorate,decoration={zigzag,pre length=0.3cm,post
 length=0.3cm,segment length=6}},
 ground/.style={fill,pattern=north east lines,draw=none,minimum
 width=0.75cm,minimum height=0.3cm},
 dampic/.pic={\fill[white] (-0.1,-0.3) rectangle (0.3,0.3);
 \draw (-0.3,0.3) -| (0.3,-0.3) -- (-0.3,-0.3);
 \draw[line width=1mm] (-0.1,-0.3) -- (-0.1,0.3);}]
 
  \node[mass,minimum width=1.75cm,minimum height=1cm] (m1) {$m_1$};
  \node[mass,minimum width=1.75cm,minimum height=1cm,right=1.5cm of
  m1] (m2) {$m_2$};
  \node[left=2cm of m1,ground,minimum width=3mm,minimum height=2.5cm] (g1){};
  \draw (g1.north east) -- (g1.south east);

  \draw[spring] ([yshift=3mm]g1.east) coordinate(aux)
   -- (m1.west|-aux) node[midway,above=1mm]{$c_0$};
  \draw[spring]  (m1.east|-aux) -- (m2.west|-aux) node[midway,above=1mm]{$F_c(p_1,p_2)$};

  \draw ([yshift=-3mm]g1.east) coordinate(aux')
   -- (m1.west|-aux') pic[midway]{dampic} node[midway,below=3mm]{$d$}
     (m1.east|-aux') -- (m2.west|-aux') pic[midway]{dampic} node[midway,below=3mm]{$d$};

  \foreach \X in {1,2}  
  {\draw[thin] (m\X.north) -- ++ (0,1) coordinate[midway](aux\X);
   \draw[latex-] (aux\X) -- ++ (-0.5,0) node[above]{$F_\X$}; 
   \draw[thin,dashed] (m\X.south) -- ++ (0,-1) coordinate[pos=0.85](aux'\X);
   \draw[->] (aux'\X) -- ++ (1,0) node[midway,above]{$p_\X$}
    node[left,minimum height=7mm,minimum width=1mm] (g'\X){};
  }
\end{tikzpicture}
\captionof{figure}{The mass-spring-damper system}
\label{fig: sys}
\end{minipage}
\begin{minipage}[b]{0.4\textwidth}
\begin{tabular}{ c | c}
\hline
parameters & values\\ \hline
$m_1$ & $\SI{5}{\kilogram}$\\
$m_2$ & $\SI{1}{\kilogram}$\\
$c_0$ & $\SI[per-mode=symbol]{1}{\newton\per\metre}$\\
$d$ & $\SI[per-mode=symbol]{1}{\newton\second\per\metre}$\\
\hline
\end{tabular}
\captionof{table}{System parameters}
\label{tbl: sys}
\end{minipage}
\end{minipage}
\section{Numerical Validation}
\label{sec: validation}
In this section, the proposed MRAC approach is validated through a numerical example taken from \cite{kersting2017direct}. The system is a mass-spring-damper system, shown in the Fig. \ref{fig: sys}, where $m_1, m_2$ denote the masses, $d$ represents the damping factor. The displacement of the two spring are denoted by $p_1,p_2$, The forces operated on the masses are $F_1, F_2$, respectively. The left mass is fixed with the wall by the first spring. It has a static spring constant $c_0$. The values of the system parameters are shown in Table \ref{tbl: sys}. The two masses are connected with the second spring exhibiting a PWA stiffness characteristics
\begin{equation}
    F_c(p_1,p_2)=
    \begin{cases}
        c_1=\SI[per-mode=symbol]{10}{\newton\per\metre}, & \quad \text{if } |p_2 - p_1|\leq \SI{1}{\metre}\\
        c_2=\SI[per-mode=symbol]{1}{\newton\per\metre}, & \quad \text{if } p_2 - p_1 >\SI{1}{\metre}\\
        c_3=\SI[per-mode=symbol]{100}{\newton\per\metre}, & \quad \text{if } p_2 - p_1 <\SI{-1}{\metre}
    \end{cases}
\end{equation}

Let the state $x=[x_1,x_2,x_3,x_4]^T=[p_1,\dot{p}_1,p_2,\dot{p}_2]^T$ and the input $u=[F_1,F_2]^T$. The region partitions are given as
\begin{align*}
    \begin{split}
        &\Omega_1=\{x^T \in \mathbb{R}^4 | |x_3-x_1| \leq 1\},\\
        &\Omega_2=\{x^T \in \mathbb{R}^4 | x_3-x_1 > 1\},\\
        &\Omega_3=\{x^T \in \mathbb{R}^4 | x_3-x_1 < 1\}.
    \end{split}
\end{align*}
The system dynamics can be described by a PWA system. For example the $3$rd subsystem in the state space form is
\begin{equation}
\dot{x}=
    \begin{bmatrix}
        0 & 1 & 0 & 0\\
        -\frac{c_0+c_3}{m_1} & -\frac{2d}{m_1} & \frac{c_3}{m_1} & \frac{d}{m_1}\\
        0 & 0 & 0 & 1\\
        \frac{c_3}{m_2} & \frac{d}{m_2} & -\frac{c_3}{m_2} & -\frac{d}{m_2}
    \end{bmatrix}
    x+
    \begin{bmatrix}
        0 & 0\\
        \frac{1}{m_1} & 0\\
        0 & 0\\
        0 & \frac{1}{m_2}
    \end{bmatrix}
    u+
    \begin{bmatrix}
        0\\
        \frac{c_3-c_1}{m_1}\\
        0\\
        \frac{c_1-c_3}{m_2}
    \end{bmatrix}.
\end{equation}
The reference system is chosen as
\begin{alignat}{3}
    &A_{m1} = 
    \begin{bmatrix}
        0 & 1 & 0 & 0\\
        -25 & -10 & 0 & 0\\
        0 & 0 & 0 & 1\\
        0 & 0 & -25 & -10
    \end{bmatrix},
    \quad 
    &&B_{m1}=
    \begin{bmatrix}
        0 & 0\\
        25 & 0\\
        0 & 0\\
        0 & 25
    \end{bmatrix},
    \quad 
    &&f_{m1}=
    \begin{bmatrix}
        0\\
        0\\
        0\\
        0
    \end{bmatrix}\\
    &A_{m2} = 
    \begin{bmatrix}
        0 & 1 & 0 & 0\\
        -16 & -8 & 0 & 0\\
        0 & 0 & 0 & 1\\
        0 & 0 & -16 & -8
    \end{bmatrix},
    \quad 
    &&B_{m2}=
    \begin{bmatrix}
        0 & 0\\
        16 & 0\\
        0 & 0\\
        0 & 16
    \end{bmatrix},
    \quad 
    &&f_{m2}=
    \begin{bmatrix}
        0\\
        5\\
        0\\
        -5
    \end{bmatrix}\\
    &A_{m3} = 
    \begin{bmatrix}
        0 & 1 & 0 & 0\\
        -49 & -14 & 0 & 0\\
        0 & 0 & 0 & 1\\
        0 & 0 & -49 & -14
    \end{bmatrix},
    \quad 
    &&B_{m3}=
    \begin{bmatrix}
        0 & 0\\
        49 & 0\\
        0 & 0\\
        0 & 49
    \end{bmatrix},
    \quad 
    &&f_{m3}=
    \begin{bmatrix}
        0\\
        -10\\
        0\\
        -5
    \end{bmatrix}
\end{alignat}
Specifying 
\begin{equation}
    Q_{i}=
    \begin{bmatrix}
        100 & 10 & 0 & 0\\
        10 & 100 & 0 & 0\\
        0 & 0 & 100 & 10\\
        0 & 0 & 10 & 100
    \end{bmatrix}
    \quad \text{for } i \in \{1,2,3\},
\end{equation}
we obtain the following $P_i$ matrices
\begin{align}
    \begin{split}
        P_{1}=
        \begin{bmatrix}
            140 & 2 & 0 & 0\\
            2 & 5.2 & 0 & 0\\
            0 & 0 & 140 & 2\\
            0 & 0 & 2 & 5.2
        \end{bmatrix},
        P_{2}=
        \begin{bmatrix}
            121.25 & 3.125 & 0 & 0\\
            3.125 & 6.64 & 0 & 0\\
            0 & 0 & 121.25 & 3.125\\
            0 & 0 & 3.125 & 6.64
        \end{bmatrix},\\
        P_{3}=
        \begin{bmatrix}
            182.857 & 1.02 & 0 & 0\\
            1.02 & 3.644 & 0 & 0\\
            0 & 0 & 182.857 & 1.02\\
            0 & 0 & 1.02 & 3.644
        \end{bmatrix},
    \end{split}
\end{align}
which gives $\sqrt{\mu}=7.1$. The performance function is designed with $\rho_0=10, \rho_\infty=1.5, l=0.02$. We choose $\epsilon(t_0)=9, h=0.12$ and $g=0.01$ such that the condition (\ref{eqn: condition_h}) and further conditions stated in Lemma \ref{thm: eps_rho} hold. Let the initial values of the reference system and the controlled PWA system to be $0$. The initial values of the estimated controller gains are specified as $K_{xi}(t_0)=0.5K_{xi}^*, K_{ri}(t_0)=0.5K_{ri}^*, K_{fi}(t_0)=0.5K_{fi}^*, i\in \{1,2,3\}$. We use the input signal $r=[0.3\sin{(0.5 t+\pi)},r_2(t)]$, where 
\begin{equation}
    r_2(t)=
    \begin{cases}
        2, & \quad \text{for } kT+\SI{25}{\second}<t<kT+\SI{50}{\second}\\
        -2, & \quad \text{for } kT+\SI{75}{\second}<t<kT+\SI{100}{\second}\\
        0, & \quad \text{otherwise}
    \end{cases}
\end{equation}
with $k\in\mathbb{N}, T=\SI{100}{\second}$. 
\begin{figure}[h]
    \includegraphics[width=1\textwidth]{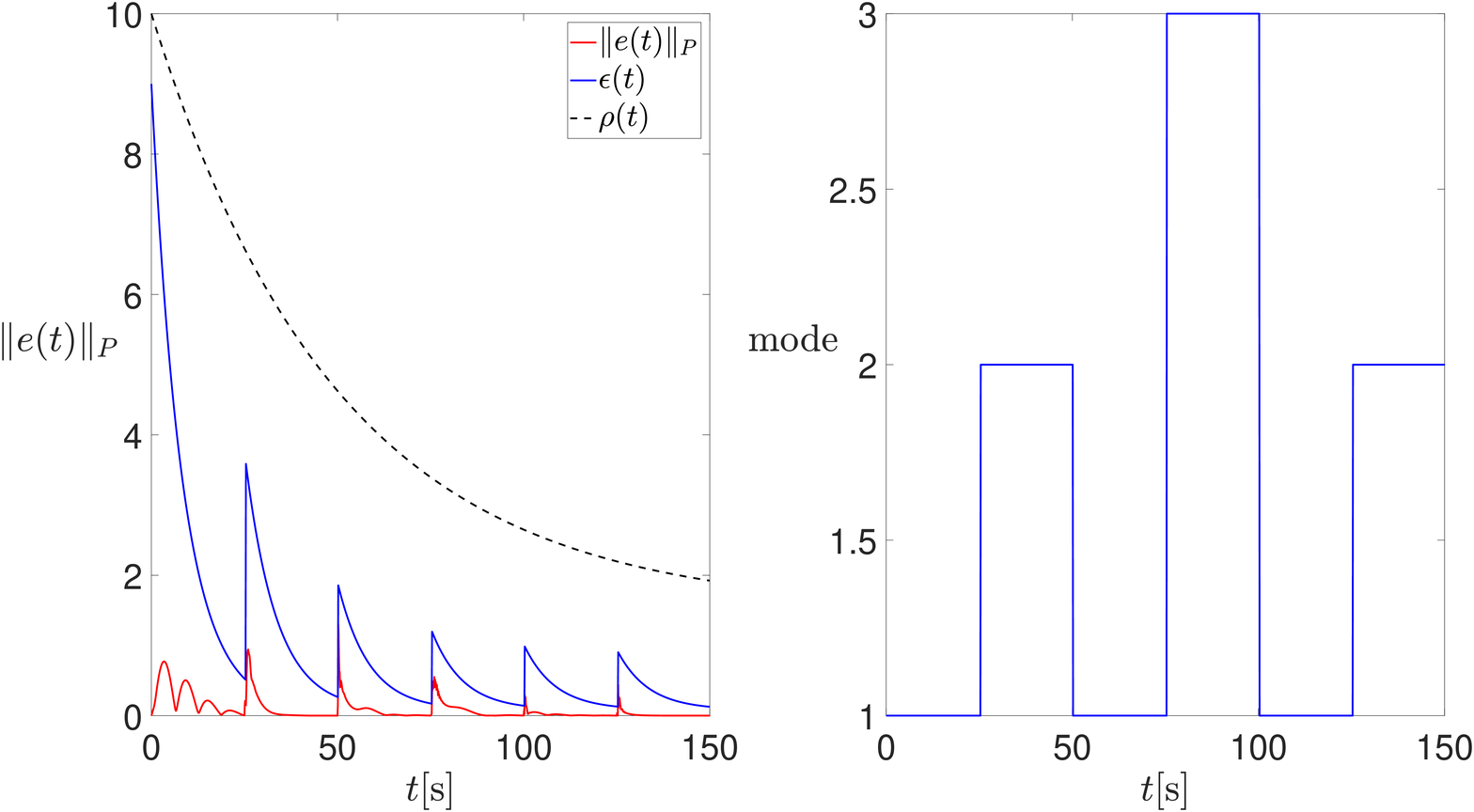}
    \caption{State tracking error, auxiliary performance signal and the predefined performance bound}
    \label{fig: e_norm}
\end{figure}

In Fig. \ref{fig: e_norm}, the prescribed performance bound $\rho(t)$, the auxiliary performance bound $\epsilon(t)$ and the weighted norm of the state tracking error $\|e(t)\|_P$ are displayed with the black dashed line, the blue solid line and the red solid line, respectively. We can see that $\|e(t)\|_P < \epsilon(t) < \rho(t)$. This guarantees the potential function $\phi(t)$ to be valid, which together with $\epsilon(t) < \rho(t)$ implies that the control objective (\ref{eqn: error_constraint}) is fulfilled. According to Theorem \ref{thm: direct_stability}, the inequality $\tau_D >\SI{24}{\second}$ should hold. We can see from the mode shown in Fig. \ref{fig: e_norm} that the dwell time constraint is satisfied.

The component-wise state tracking performance is shown in Fig. \ref{fig: state}. The red solid lines represent the state elements of the controlled PWA system and the blue dashed lines display the state elements of the reference system. Good state tracking performance can be observed.
\begin{figure}[h]
    \includegraphics[width=1\textwidth]{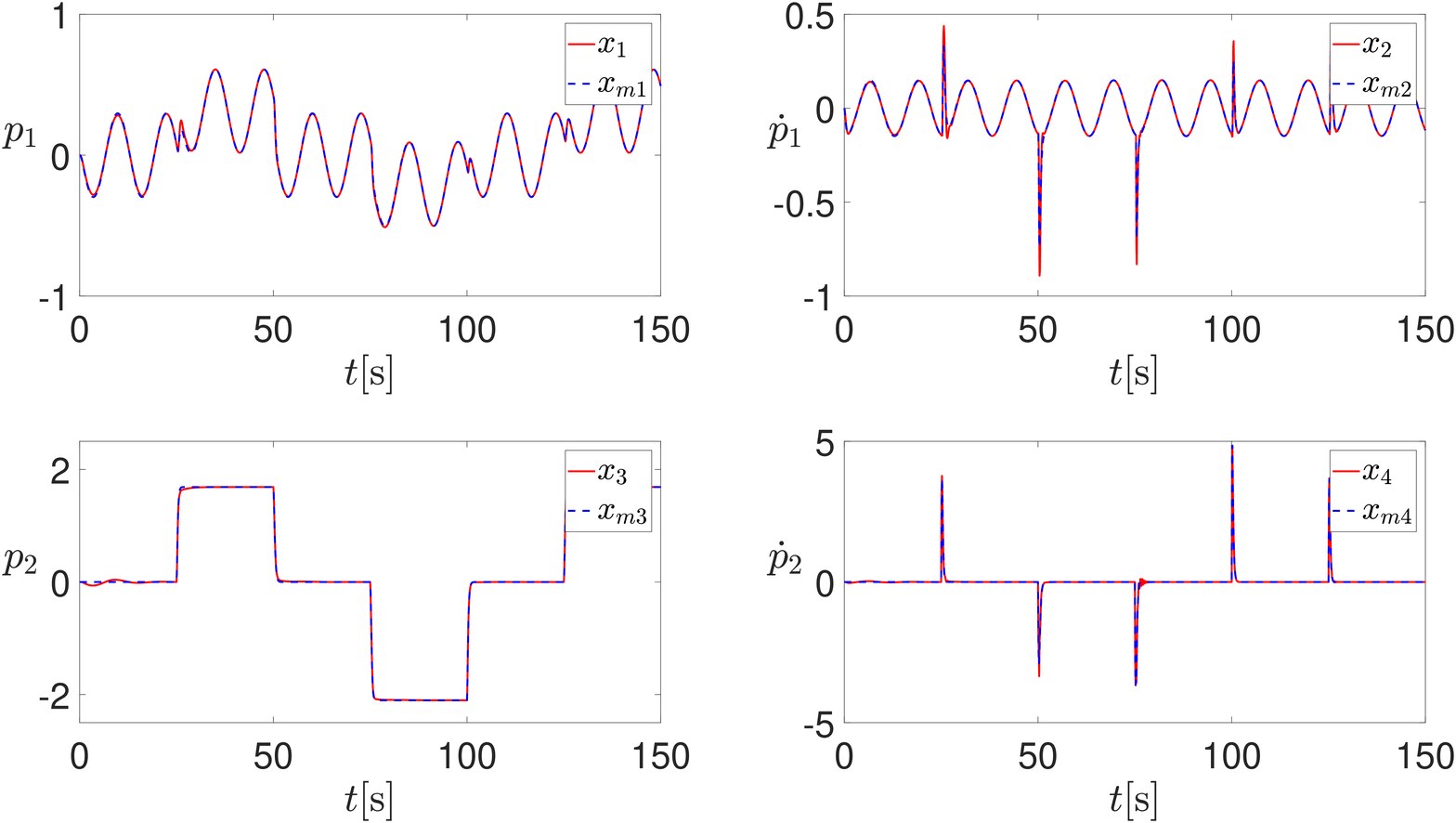}
    \caption{State tracking performance of the proposed MRAC}
    \label{fig: state}
\end{figure}

The Lyapunov function $V$ and the value of the potential function $\phi$ are displayed in Fig. \ref{fig: V}. We observe that the Lyapunov function $V$ is non-increasing, also at the switching instants. This validates the theoretical statement given in Theorem \ref{thm: direct_stability}. As expected, the potential function $\phi$ has jumps at the switching time instants, which is caused by the reset of $\epsilon$ and the value jumps of $\|e\|_P$. We also see that the value of $\phi$ is no larger than $1$, which also reflects that $\|e\|_P < \epsilon$ holds in the given time interval.
\begin{figure}[h]
    \includegraphics[width=1\textwidth]{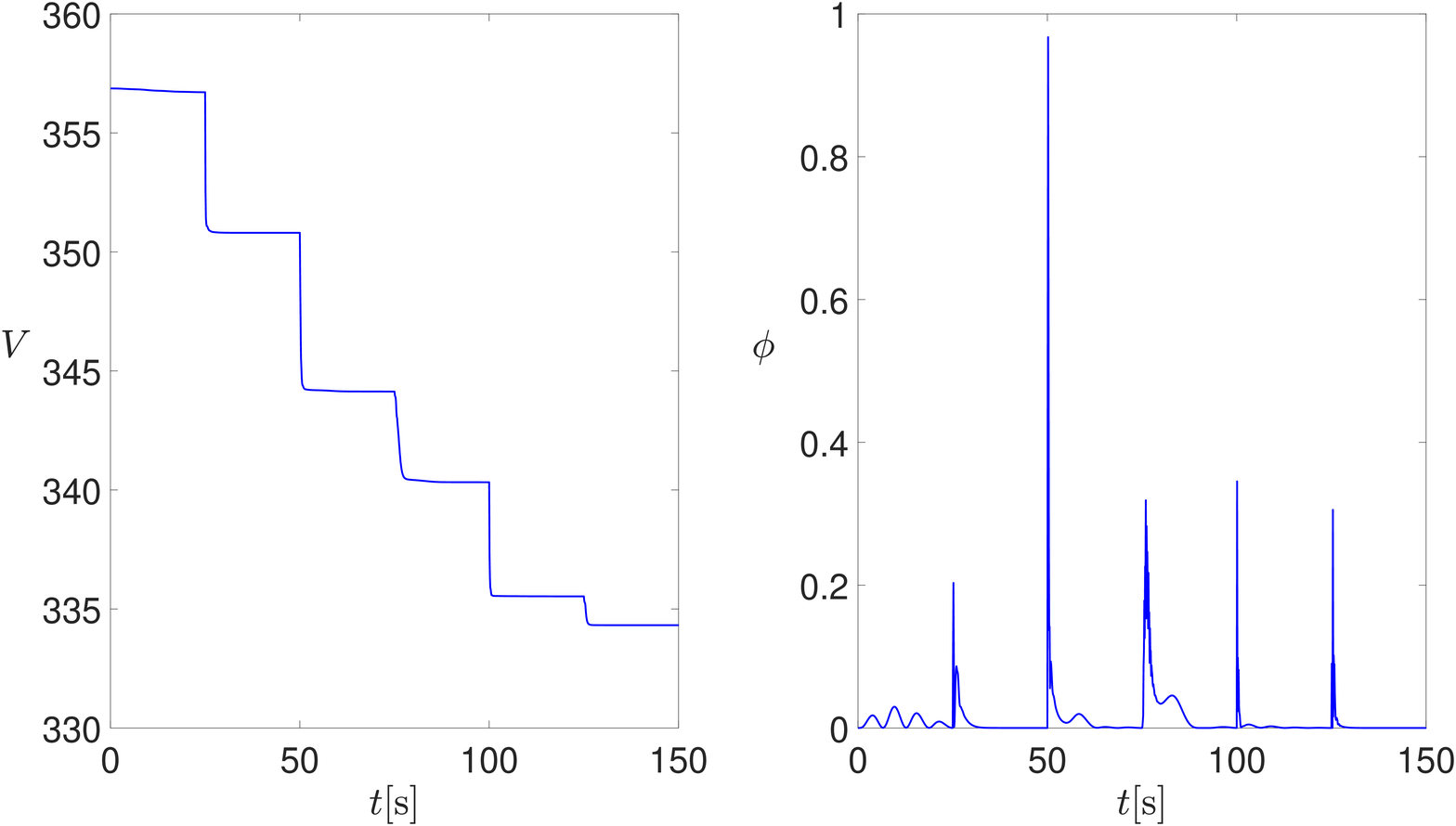}
    \caption{Lyapunov function and the barrier function}
    \label{fig: V}
\end{figure}

\section{Conclusion}
In this paper, we explored MRAC approach for PWA systems with time-varying performance guarantees on the state tracking error. The proposed method is based on barrier functions. To solve the barrier transgression problem caused by the discontinuity of the weighted Euclidean norm of the tracking error, we introduce an auxiliary performance signal, which resides within the performance bound, to construct a barrier function. With state reset at each switching instant, the weighted Euclidean norm of the state tracking error is guaranteed to be confined within the auxiliary performance bound. We construct a Lyapunov function, which is non-increasing even at the switching instants. The dwell time constraints are therefore, dependent only on the user-defined performance bound and the auxiliary performance signal. Future work may include the stability analysis when sliding mode on switching hyperplanes occurs.
\appendix

\section{Proof of Lemma \ref{thm: eps_rho}}
\label{apd: lemma_1}
\begin{proof}
The initial value of $\epsilon$ has $\epsilon(t_0)>\frac{g}{h}$, meaning that $\epsilon$ decreases exponentially towards $\frac{g}{h}$ if no switch occurs. Since $\sqrt{\mu}>1$, $\epsilon$ increases at each switching time instant and $\epsilon(t_k)>\frac{g}{h}$ for $\forall k\in \mathbb{N}^+$. If the switch terminates from some time on, then $\epsilon \to \frac{g}{h}$ for $t \to \infty$, otherwise, $\epsilon>\frac{g}{h}$ for $t \in [t_0,\infty)$. Therefore, we have $\epsilon(t) \geq \frac{g}{h}, \forall t \in [t_0,\infty)$.

Now, we explore the relationship between $\epsilon(t)$ and $\rho(t)$. We have for the time interval $[t_0,t_1)$
\begin{align}
    \begin{split}
        \epsilon(t)=\epsilon(t_0)\mathrm{e}^{-h(t-t_0)}+g\int_{t_0}^t \mathrm{e}^{-h(t-\tau)} \mathrm{d}\tau
        =(\epsilon(t_0)-\frac{g}{h})\mathrm{e}^{-h(t-t_0)}+\frac{g}{h}.
    \end{split}
\end{align}
Since $\epsilon(t_0) \in (\frac{g}{h}, \rho_0)$, $h>l$ and $\rho_\infty>\sqrt{\mu}\frac{g}{h}$, we have $\epsilon(t)<\rho(t)$ for $t\in[t_0,t_1)$. For $t=t_1$ it gives
\begin{align}
\begin{split}
    \epsilon(t_1)=\sqrt{\mu}\epsilon(t_1^-)=\sqrt{\mu}(\epsilon(t_0)-\frac{g}{h})\mathrm{e}^{-h(t_1-t_0)}+\sqrt{\mu}\frac{g}{h}.
\end{split}
\end{align}
Let $\Delta t_1 \triangleq t_1-t_0$, we have
\begin{align}
    \begin{split}
        &\rho(t_1)-\epsilon(t_1)\\
        =&(\rho_0-\rho_\infty)\mathrm{e}^{-l \Delta t_1}-\sqrt{\mu}(\epsilon(t_0)-\frac{g}{h})\mathrm{e}^{-h \Delta t_1}+(\rho_\infty-\sqrt{\mu}\frac{g}{h})\\
        \geq &(\rho_0-\rho_\infty)\mathrm{e}^{-l \Delta t_1}-\sqrt{\mu}(\epsilon(t_0)-\frac{g}{h})\mathrm{e}^{-h \Delta t_1}+(\rho_\infty-\sqrt{\mu}\frac{g}{h})\mathrm{e}^{-l\Delta t_1}\\
        =&(\rho_0-\sqrt{\mu}\frac{g}{h})\mathrm{e}^{-l \Delta t_1}-\sqrt{\mu}(\epsilon(t_0)-\frac{g}{h})\mathrm{e}^{-h \Delta t_1}\\
        \geq &(\rho_0-\sqrt{\mu}\frac{g}{h})\mathrm{e}^{-l \Delta t_1}-\sqrt{\mu}(\rho_0-\frac{g}{h})\mathrm{e}^{-h \Delta t_1}.
    \end{split}
\end{align}
If the following inequality holds, we will immediately have $\rho(t_1)>\epsilon(t_1)$.
\begin{equation}
\label{eqn: ineq_tmp1}
    (\rho_0-\sqrt{\mu}\frac{g}{h})\mathrm{e}^{-l \Delta t_1}>\sqrt{\mu}(\rho_0-\frac{g}{h})\mathrm{e}^{-h \Delta t_1}.
\end{equation}
Since $\rho_0>\rho_\infty>\sqrt{\mu}\frac{g}{h}>\frac{g}{h}$, we have $\rho_0-\sqrt{\mu}\frac{g}{h}>0$ and $\sqrt{\mu}(\rho_0-\frac{g}{h})>0$. Therefore, (\ref{eqn: ineq_tmp1}) is equivalent to
\begin{align}
    \begin{split}
        \frac{\rho_0-\sqrt{\mu}\frac{g}{h}}{\sqrt{\mu}(\rho_0-\frac{g}{h})}>\mathrm{e}^{-(h-l)\Delta t_1}
    \end{split}
\end{align}
Taking the logarithm of both sides we obtain
\begin{equation}
    \Delta t_1 > \frac{1}{h-l}\ln{\frac{\sqrt{\mu} \rho_0-\frac{g}{h}\sqrt{\mu}}{\rho_0-\frac{g}{h}\sqrt{\mu}}}.
\end{equation}
Following the above analysis we can obtain $\epsilon(t) <\rho(t)$ for $t \in [t_{k-1}, t_k)$ and $\epsilon(t_k) <\rho(t_k)$ for $k \in \mathbb{N}^+$ if
\begin{equation}
    \Delta t_k 
    > \frac{1}{h-l}\ln{\frac{\sqrt{\mu} \rho(t_{k-1})-\frac{g}{h}\sqrt{\mu}}{\rho(t_{k-1})-\frac{g}{h}\sqrt{\mu}}}
    =\frac{1}{h-l}\ln({\sqrt{\mu}+\frac{(\mu-\sqrt{\mu})\frac{g}{h}}{\rho(t_{k-1})-\frac{g}{h}\sqrt{\mu}}}).
\end{equation}
If the dwell time $\tau_D$ is no smaller than the maximal required interval length $\max\{\Delta t_k\}$, then $\epsilon(t) < \rho(t)$ holds for $\cup [t_{k-1},t_k), k\in\mathbb{N}^+$. Because $\rho(t_{k-1}) \geq \rho_\infty$ for $k\in\mathbb{N}^+$, we have
\begin{equation}
    \tau_D \geq \max\{\Delta t_k\} = \frac{1}{h-l}\ln{\frac{\sqrt{\mu} \rho_\infty-\frac{g}{h}\sqrt{\mu}}{\rho_\infty-\frac{g}{h}\sqrt{\mu}}}
\end{equation}
So we can conclude that if (\ref{eqn: condition_tau}) holds, then $\epsilon(t) < \rho(t)$ for $t\in[t_0,\infty)$.
\end{proof}

\section{Proof of Lemma \ref{thm: ref_stability}}
\label{apd: lemma_2}
Consider the Lyapunov function $V_m=x_m^T (\sum_{i=1}^s \chi_i P_i) x_m$ for the homogeneous part of (\ref{eqn: ref sys}). The increment of $V_m$ at switching instants satisfies $V_m(t_k) \leq \mu V_m(t_k^-)$. In the interval $t \in [t_{k-1},t_k), k\in \mathbb{N}^+$, we have $\dot{V}_m \leq -\alpha_m V_m$ with 
\begin{equation}
    \alpha_m=\min_{i \in \mathcal{I}}\frac{\lambda_{\text{min}}(Q_i)}{\lambda_{\text{max}}(P_i)}.
\end{equation}
If the switching satisfies $t_k-t_{k-1} > \frac{\ln{\mu}}{\alpha_m}, \forall k\in \mathbb{N}^+$, the homogeneous system $\dot{x}_m=A_m x_m$ is exponentially stable and the stability of the reference system (\ref{eqn: ref sys}) can be concluded for bounded input $r$ (see \cite{morse1996supervisory, hespanha1999stability}). From (\ref{eqn: condition_h}) We have $h-l < h < \frac{1}{2}\alpha_m$, this together with $\mu>1$ leads to
\begin{align}
    \begin{split}
        \tau_D>\frac{2}{\alpha_m}\ln{{\frac{\sqrt{\mu} \rho_\infty-\frac{g}{h}\sqrt{\mu}}{\rho_\infty-\frac{g}{h}\sqrt{\mu}}}}>\frac{2}{\alpha_m}\ln{{\frac{\sqrt{\mu} (\rho_\infty-\frac{g}{h})}{\rho_\infty-\frac{g}{h}}}}=\frac{\ln{\mu}}{\alpha_m}.
    \end{split}
\end{align}
So this tells that the reference system is stable and $x_m \in \mathcal{L}_\infty$ if the dwell time constraint $\tau_D$ in (\ref{eqn: condition_tau}) is satisfied. 
\bibliographystyle{elsarticle-harv} 
\bibliography{main.bib}

\begin{thebibliography}{37}
\expandafter\ifx\csname natexlab\endcsname\relax\def\natexlab#1{#1}\fi
\providecommand{\url}[1]{\texttt{#1}}
\providecommand{\href}[2]{#2}
\providecommand{\path}[1]{#1}
\providecommand{\DOIprefix}{doi:}
\providecommand{\ArXivprefix}{arXiv:}
\providecommand{\URLprefix}{URL: }
\providecommand{\Pubmedprefix}{pmid:}
\providecommand{\doi}[1]{\href{http://dx.doi.org/#1}{\path{#1}}}
\providecommand{\Pubmed}[1]{\href{pmid:#1}{\path{#1}}}
\providecommand{\bibinfo}[2]{#2}
\ifx\xfnm\relax \def\xfnm[#1]{\unskip,\space#1}\fi
\bibitem[{Arabi et~al.(2018)Arabi, Gruenwald, Yucelen and
  Nguyen}]{arabi2018set}
\bibinfo{author}{Arabi, E.}, \bibinfo{author}{Gruenwald, B.C.},
  \bibinfo{author}{Yucelen, T.}, \bibinfo{author}{Nguyen, N.T.},
  \bibinfo{year}{2018}.
\newblock \bibinfo{title}{A set-theoretic model reference adaptive control
  architecture for disturbance rejection and uncertainty suppression with
  strict performance guarantees}.
\newblock \bibinfo{journal}{International Journal of Control}
  \bibinfo{volume}{91}, \bibinfo{pages}{1195--1208}.
\bibitem[{Arabi and Yucelen(2019)}]{arabi2019set}
\bibinfo{author}{Arabi, E.}, \bibinfo{author}{Yucelen, T.},
  \bibinfo{year}{2019}.
\newblock \bibinfo{title}{Set-theoretic model reference adaptive control with
  time-varying performance bounds}.
\newblock \bibinfo{journal}{International Journal of Control}
  \bibinfo{volume}{92}, \bibinfo{pages}{2509--2520}.
\bibitem[{Arabi et~al.(2019)Arabi, Yucelen, Gruenwald, Fravolini, Balakrishnan
  and Nguyen}]{arabi2019neuroadaptive}
\bibinfo{author}{Arabi, E.}, \bibinfo{author}{Yucelen, T.},
  \bibinfo{author}{Gruenwald, B.C.}, \bibinfo{author}{Fravolini, M.},
  \bibinfo{author}{Balakrishnan, S.}, \bibinfo{author}{Nguyen, N.T.},
  \bibinfo{year}{2019}.
\newblock \bibinfo{title}{A neuroadaptive architecture for model reference
  control of uncertain dynamical systems with performance guarantees}.
\newblock \bibinfo{journal}{Systems \& Control Letters} \bibinfo{volume}{125},
  \bibinfo{pages}{37--44}.
\bibitem[{Bechlioulis and Rovithakis(2008)}]{bechlioulis2008robust}
\bibinfo{author}{Bechlioulis, C.P.}, \bibinfo{author}{Rovithakis, G.A.},
  \bibinfo{year}{2008}.
\newblock \bibinfo{title}{Robust adaptive control of feedback linearizable mimo
  nonlinear systems with prescribed performance}.
\newblock \bibinfo{journal}{IEEE Transactions on Automatic Control}
  \bibinfo{volume}{53}, \bibinfo{pages}{2090--2099}.
\bibitem[{Bechlioulis and Rovithakis(2010)}]{bechlioulis2010prescribed}
\bibinfo{author}{Bechlioulis, C.P.}, \bibinfo{author}{Rovithakis, G.A.},
  \bibinfo{year}{2010}.
\newblock \bibinfo{title}{Prescribed performance adaptive control for
  multi-input multi-output affine in the control nonlinear systems}.
\newblock \bibinfo{journal}{IEEE Transactions on Automatic Control}
  \bibinfo{volume}{55}, \bibinfo{pages}{1220--1226}.
\bibitem[{Bemporad et~al.(2000)Bemporad, Ferrari-Trecate and
  Morari}]{bemporad2000observability}
\bibinfo{author}{Bemporad, A.}, \bibinfo{author}{Ferrari-Trecate, G.},
  \bibinfo{author}{Morari, M.}, \bibinfo{year}{2000}.
\newblock \bibinfo{title}{Observability and controllability of piecewise affine
  and hybrid systems}.
\newblock \bibinfo{journal}{IEEE transactions on automatic control}
  \bibinfo{volume}{45}, \bibinfo{pages}{1864--1876}.
\bibitem[{di~Bernardo et~al.(2016)di~Bernardo, Montanaro, Ortega and
  Santini}]{di2016extended}
\bibinfo{author}{di~Bernardo, M.}, \bibinfo{author}{Montanaro, U.},
  \bibinfo{author}{Ortega, R.}, \bibinfo{author}{Santini, S.},
  \bibinfo{year}{2016}.
\newblock \bibinfo{title}{Extended hybrid model reference adaptive control of
  piecewise affine systems}.
\newblock \bibinfo{journal}{Nonlinear Analysis: Hybrid Systems}
  \bibinfo{volume}{21}, \bibinfo{pages}{11--21}.
\bibitem[{di~Bernardo et~al.(2013)di~Bernardo, Montanaro and
  Santini}]{di2013hybrid}
\bibinfo{author}{di~Bernardo, M.}, \bibinfo{author}{Montanaro, U.},
  \bibinfo{author}{Santini, S.}, \bibinfo{year}{2013}.
\newblock \bibinfo{title}{Hybrid model reference adaptive control of piecewise
  affine systems}.
\newblock \bibinfo{journal}{IEEE Transactions on Automatic Control}
  \bibinfo{volume}{58}, \bibinfo{pages}{304--316}.
\bibitem[{Bernardo et~al.(2013)Bernardo, Montanaro, Olm and
  Santini}]{bernardo2013model}
\bibinfo{author}{Bernardo, M.d.}, \bibinfo{author}{Montanaro, U.},
  \bibinfo{author}{Olm, J.M.}, \bibinfo{author}{Santini, S.},
  \bibinfo{year}{2013}.
\newblock \bibinfo{title}{Model reference adaptive control of discrete-time
  piecewise linear systems}.
\newblock \bibinfo{journal}{International Journal of Robust and Nonlinear
  Control} \bibinfo{volume}{23}, \bibinfo{pages}{709--730}.
\bibitem[{Collins and Van~Schuppen(2004)}]{collins2004observability}
\bibinfo{author}{Collins, P.}, \bibinfo{author}{Van~Schuppen, J.H.},
  \bibinfo{year}{2004}.
\newblock \bibinfo{title}{Observability of piecewise-affine hybrid systems},
  in: \bibinfo{booktitle}{International Workshop on Hybrid Systems: Computation
  and Control}, \bibinfo{organization}{Springer}. pp.
  \bibinfo{pages}{265--279}.
\bibitem[{Habets et~al.(2006)Habets, Collins and van
  Schuppen}]{habets2006reachability}
\bibinfo{author}{Habets, L.}, \bibinfo{author}{Collins, P.J.},
  \bibinfo{author}{van Schuppen, J.H.}, \bibinfo{year}{2006}.
\newblock \bibinfo{title}{Reachability and control synthesis for
  piecewise-affine hybrid systems on simplices}.
\newblock \bibinfo{journal}{IEEE Transactions on Automatic Control}
  \bibinfo{volume}{51}, \bibinfo{pages}{938--948}.
\bibitem[{Hackl et~al.(2013)Hackl, Hopfe, Ilchmann, Mueller and
  Trenn}]{hackl2013funnel}
\bibinfo{author}{Hackl, C.M.}, \bibinfo{author}{Hopfe, N.},
  \bibinfo{author}{Ilchmann, A.}, \bibinfo{author}{Mueller, M.},
  \bibinfo{author}{Trenn, S.}, \bibinfo{year}{2013}.
\newblock \bibinfo{title}{Funnel control for systems with relative degree two}.
\newblock \bibinfo{journal}{SIAM Journal on Control and Optimization}
  \bibinfo{volume}{51}, \bibinfo{pages}{965--995}.
\bibitem[{Hespanha and Morse(1999)}]{hespanha1999stability}
\bibinfo{author}{Hespanha, J.P.}, \bibinfo{author}{Morse, A.S.},
  \bibinfo{year}{1999}.
\newblock \bibinfo{title}{Stability of switched systems with average
  dwell-time}, in: \bibinfo{booktitle}{Proceedings of the 38th IEEE conference
  on decision and control (Cat. No. 99CH36304)}, \bibinfo{organization}{IEEE}.
  pp. \bibinfo{pages}{2655--2660}.
\bibitem[{Ilchmann and Schuster(2009)}]{ilchmann2009pi}
\bibinfo{author}{Ilchmann, A.}, \bibinfo{author}{Schuster, H.},
  \bibinfo{year}{2009}.
\newblock \bibinfo{title}{Pi-funnel control for two mass systems}.
\newblock \bibinfo{journal}{IEEE Transactions on Automatic Control}
  \bibinfo{volume}{54}, \bibinfo{pages}{918--923}.
\bibitem[{Kersting and Buss(2017)}]{kersting2017direct}
\bibinfo{author}{Kersting, S.}, \bibinfo{author}{Buss, M.},
  \bibinfo{year}{2017}.
\newblock \bibinfo{title}{Direct and indirect model reference adaptive control
  for multivariable piecewise affine systems}.
\newblock \bibinfo{journal}{IEEE Transactions on Automatic Control}
  \bibinfo{volume}{62}, \bibinfo{pages}{5634--5649}.
\bibitem[{Lavretsky(2011)}]{lavretsky2011adaptive}
\bibinfo{author}{Lavretsky, E.}, \bibinfo{year}{2011}.
\newblock \bibinfo{title}{Adaptive output feedback design using asymptotic
  properties of lqg/ltr controllers}.
\newblock \bibinfo{journal}{IEEE Transactions on Automatic Control}
  \bibinfo{volume}{57}, \bibinfo{pages}{1587--1591}.
\bibitem[{Liu et~al.(2014)Liu, Li and Tong}]{liu2014adaptive}
\bibinfo{author}{Liu, Y.J.}, \bibinfo{author}{Li, D.J.}, \bibinfo{author}{Tong,
  S.}, \bibinfo{year}{2014}.
\newblock \bibinfo{title}{Adaptive output feedback control for a class of
  nonlinear systems with full-state constraints}.
\newblock \bibinfo{journal}{International Journal of Control}
  \bibinfo{volume}{87}, \bibinfo{pages}{281--290}.
\bibitem[{Liu and Tong(2016)}]{liu2016barrier}
\bibinfo{author}{Liu, Y.J.}, \bibinfo{author}{Tong, S.}, \bibinfo{year}{2016}.
\newblock \bibinfo{title}{Barrier lyapunov functions-based adaptive control for
  a class of nonlinear pure-feedback systems with full state constraints}.
\newblock \bibinfo{journal}{Automatica} \bibinfo{volume}{64},
  \bibinfo{pages}{70--75}.
\bibitem[{Morse(1996)}]{morse1996supervisory}
\bibinfo{author}{Morse, A.S.}, \bibinfo{year}{1996}.
\newblock \bibinfo{title}{Supervisory control of families of linear set-point
  controllers-part i. exact matching}.
\newblock \bibinfo{journal}{IEEE transactions on Automatic Control}
  \bibinfo{volume}{41}, \bibinfo{pages}{1413--1431}.
\bibitem[{Niu et~al.(2015)Niu, Zhao, Fan and Cheng}]{niu2015new}
\bibinfo{author}{Niu, B.}, \bibinfo{author}{Zhao, X.}, \bibinfo{author}{Fan,
  X.}, \bibinfo{author}{Cheng, Y.}, \bibinfo{year}{2015}.
\newblock \bibinfo{title}{A new control method for state-constrained nonlinear
  switched systems with application to chemical process}.
\newblock \bibinfo{journal}{International Journal of Control}
  \bibinfo{volume}{88}, \bibinfo{pages}{1693--1701}.
\bibitem[{Pavlov et~al.(2007)Pavlov, Pogromsky, Van De~Wouw and
  Nijmeijer}]{pavlov2007convergence}
\bibinfo{author}{Pavlov, A.}, \bibinfo{author}{Pogromsky, A.},
  \bibinfo{author}{Van De~Wouw, N.}, \bibinfo{author}{Nijmeijer, H.},
  \bibinfo{year}{2007}.
\newblock \bibinfo{title}{On convergence properties of piecewise affine
  systems}.
\newblock \bibinfo{journal}{International Journal of Control}
  \bibinfo{volume}{80}, \bibinfo{pages}{1233--1247}.
\bibitem[{Rodrigues and How(2003)}]{rodrigues2003observer}
\bibinfo{author}{Rodrigues, L.}, \bibinfo{author}{How, J.P.},
  \bibinfo{year}{2003}.
\newblock \bibinfo{title}{Observer-based control of piecewise-affine systems}.
\newblock \bibinfo{journal}{International Journal of Control}
  \bibinfo{volume}{76}, \bibinfo{pages}{459--477}.
\bibitem[{Sang and Tao(2011a)}]{sang2011adaptive2}
\bibinfo{author}{Sang, Q.}, \bibinfo{author}{Tao, G.}, \bibinfo{year}{2011}a.
\newblock \bibinfo{title}{Adaptive control of piecewise linear systems with
  applications to nasa gtm}, in: \bibinfo{booktitle}{Proceedings of the 2011
  American Control Conference}, \bibinfo{organization}{IEEE}. pp.
  \bibinfo{pages}{1157--1162}.
\bibitem[{Sang and Tao(2011b)}]{sang2011adaptive}
\bibinfo{author}{Sang, Q.}, \bibinfo{author}{Tao, G.}, \bibinfo{year}{2011}b.
\newblock \bibinfo{title}{Adaptive control of piecewise linear systems with
  applications to nasa gtm}, in: \bibinfo{booktitle}{Proceedings of the 2011
  American Control Conference}, \bibinfo{organization}{IEEE}. pp.
  \bibinfo{pages}{1157--1162}.
\bibitem[{Sang and Tao(2012a)}]{sang2012adaptive2}
\bibinfo{author}{Sang, Q.}, \bibinfo{author}{Tao, G.}, \bibinfo{year}{2012}a.
\newblock \bibinfo{title}{Adaptive control of piecewise linear systems: the
  state tracking case}.
\newblock \bibinfo{journal}{IEEE Transactions on Automatic Control}
  \bibinfo{volume}{57}, \bibinfo{pages}{522--528}.
\bibitem[{Sang and Tao(2012b)}]{sang2012adaptive1}
\bibinfo{author}{Sang, Q.}, \bibinfo{author}{Tao, G.}, \bibinfo{year}{2012}b.
\newblock \bibinfo{title}{Adaptive control of piecewise linear systems with
  output feedback for output tracking}, in: \bibinfo{booktitle}{Decision and
  Control (CDC), 2012 IEEE 51st Annual Conference on},
  \bibinfo{organization}{IEEE}. pp. \bibinfo{pages}{5422--5427}.
\bibitem[{Tao(2014)}]{tao2014multivariable}
\bibinfo{author}{Tao, G.}, \bibinfo{year}{2014}.
\newblock \bibinfo{title}{Multivariable adaptive control: A survey}.
\newblock \bibinfo{journal}{Automatica} \bibinfo{volume}{50},
  \bibinfo{pages}{2737--2764}.
\bibitem[{Tao et~al.(2020)Tao, Roy and Baldi}]{tao2020issue}
\bibinfo{author}{Tao, T.}, \bibinfo{author}{Roy, S.}, \bibinfo{author}{Baldi,
  S.}, \bibinfo{year}{2020}.
\newblock \bibinfo{title}{The issue of transients in leakage-based model
  reference adaptive control of switched linear systems}.
\newblock \bibinfo{journal}{Nonlinear Analysis: Hybrid Systems}
  \bibinfo{volume}{36}, \bibinfo{pages}{100885}.
\bibitem[{Tee et~al.(2009)Tee, Ge and Tay}]{tee2009barrier}
\bibinfo{author}{Tee, K.P.}, \bibinfo{author}{Ge, S.S.}, \bibinfo{author}{Tay,
  E.H.}, \bibinfo{year}{2009}.
\newblock \bibinfo{title}{Barrier lyapunov functions for the control of
  output-constrained nonlinear systems}.
\newblock \bibinfo{journal}{Automatica} \bibinfo{volume}{45},
  \bibinfo{pages}{918--927}.
\bibitem[{Wang et~al.(2012)Wang, Hou and Dong}]{wang2012model}
\bibinfo{author}{Wang, Q.}, \bibinfo{author}{Hou, Y.}, \bibinfo{author}{Dong,
  C.}, \bibinfo{year}{2012}.
\newblock \bibinfo{title}{Model reference robust adaptive control for a class
  of uncertain switched linear systems}.
\newblock \bibinfo{journal}{International Journal of Robust and Nonlinear
  Control} \bibinfo{volume}{22}, \bibinfo{pages}{1019--1035}.
\bibitem[{Wu and Zhao(2015)}]{wu2015h}
\bibinfo{author}{Wu, C.}, \bibinfo{author}{Zhao, J.}, \bibinfo{year}{2015}.
\newblock \bibinfo{title}{$h_\infty$ adaptive tracking control for switched
  systems based on an average dwell-time method}.
\newblock \bibinfo{journal}{International Journal of Systems Science}
  \bibinfo{volume}{46}, \bibinfo{pages}{2547--2559}.
\bibitem[{Wu et~al.(2015)Wu, Zhao and Sun}]{wu2015adaptive}
\bibinfo{author}{Wu, C.}, \bibinfo{author}{Zhao, J.}, \bibinfo{author}{Sun,
  X.M.}, \bibinfo{year}{2015}.
\newblock \bibinfo{title}{Adaptive tracking control for uncertain switched
  systems under asynchronous switching}.
\newblock \bibinfo{journal}{International Journal of Robust and Nonlinear
  Control} \bibinfo{volume}{25}, \bibinfo{pages}{3457--3477}.
\bibitem[{Xiao and Dong(2019)}]{xiao2019robust}
\bibinfo{author}{Xiao, S.}, \bibinfo{author}{Dong, J.}, \bibinfo{year}{2019}.
\newblock \bibinfo{title}{Robust adaptive fault-tolerant tracking control for
  uncertain linear systems with time-varying performance bounds}.
\newblock \bibinfo{journal}{International Journal of Robust and Nonlinear
  Control} \bibinfo{volume}{29}, \bibinfo{pages}{849--866}.
\bibitem[{Xie and Zhao(2018)}]{xie2018h}
\bibinfo{author}{Xie, J.}, \bibinfo{author}{Zhao, J.}, \bibinfo{year}{2018}.
\newblock \bibinfo{title}{$h_\infty$ model reference adaptive control for
  switched systems based on the switched closed-loop reference model}.
\newblock \bibinfo{journal}{Nonlinear Analysis: Hybrid Systems}
  \bibinfo{volume}{27}, \bibinfo{pages}{92--106}.
\bibitem[{Yuan et~al.(2018a)Yuan, De~Schutter and Baldi}]{yuan2018robust}
\bibinfo{author}{Yuan, S.}, \bibinfo{author}{De~Schutter, B.},
  \bibinfo{author}{Baldi, S.}, \bibinfo{year}{2018}a.
\newblock \bibinfo{title}{Robust adaptive tracking control of uncertain slowly
  switched linear systems}.
\newblock \bibinfo{journal}{Nonlinear Analysis: Hybrid Systems}
  \bibinfo{volume}{27}, \bibinfo{pages}{1--12}.
\bibitem[{Yuan et~al.(2018b)Yuan, Zhang, De~Schutter and Baldi}]{yuan2018novel}
\bibinfo{author}{Yuan, S.}, \bibinfo{author}{Zhang, L.},
  \bibinfo{author}{De~Schutter, B.}, \bibinfo{author}{Baldi, S.},
  \bibinfo{year}{2018}b.
\newblock \bibinfo{title}{A novel lyapunov function for a non-weighted l2 gain
  of asynchronously switched linear systems}.
\newblock \bibinfo{journal}{Automatica} \bibinfo{volume}{87},
  \bibinfo{pages}{310--317}.
\bibitem[{Zhao and Song(2018)}]{zhao2018removing}
\bibinfo{author}{Zhao, K.}, \bibinfo{author}{Song, Y.}, \bibinfo{year}{2018}.
\newblock \bibinfo{title}{Removing the feasibility conditions imposed on
  tracking control designs for state-constrained strict-feedback systems}.
\newblock \bibinfo{journal}{IEEE Transactions on Automatic Control}
  \bibinfo{volume}{64}, \bibinfo{pages}{1265--1272}.

\end{thebibliography}





\end{document}